%% file: arxiv.tex
\documentclass[
  10pt,
  letterpaper,
  headings=small,
  numbers=noendperiod,
  bibliography=totoc
]{scrartcl}

\usepackage[
  left=1.00in,
  right=1.00in,
  top=0.86in,
  bottom=0.95in,
  includehead
]{geometry}
\usepackage[T1]{fontenc}
\usepackage[utf8]{inputenc}
\usepackage{textcomp}
\usepackage{microtype}
\usepackage[tt=false,type1=true]{libertine}

\usepackage{xurl}
\usepackage{graphicx}
\usepackage{amsfonts}
\usepackage{needspace}
\usepackage{algorithm}
\usepackage{tabularx}
\usepackage{booktabs}
\usepackage{algorithmicx}
\usepackage{stfloats}
\usepackage{dsfont}
\usepackage{enumitem}
\usepackage{todonotes}
\usepackage{blindtext}
\usepackage{natbib}
\setcitestyle{numbers,square,comma}
\usepackage{amsmath}
\usepackage{amsthm}
\usepackage[libertine]{newtxmath}
\newtheorem{proposition}{Proposition}
\newtheorem{lemma}{Lemma}
\newtheorem{theorem}{Theorem}

\newtheorem{example}{Example}

\renewcommand{\Pr}[1]{\mathds{P}\hspace{-0.1cm}\left[#1\right]}

\usepackage{balance} 

\usepackage{dsfont}
\usepackage{booktabs}
\usepackage{tabularx}
\usepackage{pgfplots}
\usepackage{enumitem}
\usepackage{todonotes}
\usepackage{blindtext}
\usepackage{adjustbox}
\usepgfplotslibrary{fillbetween}

\usepackage{subcaption}

\usepackage{xcolor}
\pgfplotsset{compat=1.18}

\usepackage{hyperref}
\hypersetup{hidelinks}
\definecolor{mygreen}{HTML}{1d9954}

\makeatletter
\renewcommand{\maketitle}{%
  \begingroup
  \thispagestyle{plain}%
  \raggedright
  {\Large\bfseries \@title\par}%
  \vspace{0.70em}%
  {\normalsize \@author\par}%
  \ifx\@date\@empty\else
    \vspace{0.25em}{\normalsize \@date\par}%
  \fi
  \vspace{0.80em}%
  \endgroup
}

\newcommand{\compacttoc}{%
  \par\vspace{0.65em}%
  \begin{center}{\sffamily\bfseries Contents}\end{center}%
  \vspace{-0.80em}%
  \@starttoc{toc}%
}
\makeatother

\renewenvironment{abstract}{%
  \par\vspace{.65em}%
  \begin{center}{\sffamily\bfseries Abstract}\end{center}%
  \vspace{-0.55em}%
  \small\noindent\ignorespaces%
}{\par\normalsize\vspace{.45em}}

\newenvironment{keywords}{%
  \par\smallskip\noindent{\small\bfseries Additional Key Words and Phrases: }\small\ignorespaces%
}{\par\normalsize\vspace{.45em}}

\title{Asymmetric Information in Dynamic Lending Games}

\date{\footnotesize{Department of Informatics, King's College London}}
\author{%
	Kristijan Atanasov, Frederik Mallmann-Trenn, and Carmine Ventre
}

\begin{document}
\maketitle

\begin{abstract}
\noindent We study a dynamic lending game with limited but reusable supply. Borrower arrivals and loan durations are stochastic. Demand is allocated cheapest-first, subject to available supply, so a lender's quote affects both its current payoff and its ability to serve future borrowers. Motivated by industry practice, we study this problem through the lens of information disclosure: Lenders observe either only their own available supply or also their rivals' aggregate available supply. We ask how providing this information to all, some or none of the lenders affects payoffs as demand varies relative to supply, and how these effects change when a fixed total supply is divided among different numbers of lenders. Interestingly, in the case of asymmetric information disclosure, total payoffs can increase, while total lending remains the same. We identify sufficient mathematical regimes in which this phenomenon occurs. Who receives supply information matters differently as demand rises. For the case of two lenders, informing only the larger lender yields higher total lender payoff at lower demand, while informing only the smaller lender can yield the highest total lender payoff at higher demand. For a platform deciding which lenders can observe rival supply, the results suggest that disclosure can change the fees paid by borrowers without changing how much is lent. Wider access to supply information need not lower those fees: its effect depends on who receives the information and the balance between demand and supply.
\end{abstract}

\begin{keywords}
Dynamic lending games; asymmetric information; reusable resources; pure stationary equilibria.
\end{keywords}

\setcounter{tocdepth}{2}
\tableofcontents

\newpage

\section{Introduction} 
Consider a borrower seeking one unit of an asset from competing lenders. Each lender quotes a fee, and the lowest available offer wins. The winning lender receives its fee but cannot lend that unit again until it returns. Winning therefore creates a strategic trade-off: it generates an immediate payoff while temporarily reducing the lender's ability to compete for subsequent demand. Motivating applications with reusable supply include securities lending and rentals of accommodation, construction equipment and computing capacity~\cite{ecb2016securities,FTS2026}.

We model this interaction as a dynamic lending game with stochastic borrower arrivals and loan returns. Ties at the lowest fee are broken uniformly at random, with the selected lender supplying the unit. Each lender's strategy specifies its quote from the supply information it observes. If no lender has an available unit, the arriving demand is not served. Total supply remains fixed, but the amount and distribution of available supply continually change.

Competing suppliers need not have equal access to market information, particularly when such information is sold through tiered subscription services. In rental markets, AirDNA is one example: the market intelligence available to users varies across its subscription plans~\cite{airdna2026plans}. In securities lending, DataLend has used a ``give-to-get'' model, providing anonymised, aggregated industry data in return for clients' transaction data~\cite{datalend2013give}. Its inputs include aggregated outstanding loan and supply information~\cite{ross2019datalend}. The relevant question is not only how many lenders receive market information, but how information access is distributed relative to the ownership of supply. We examine how this interaction shapes pricing as demand pressure changes.

Borrower arrival and loan return rates determine market tightness, while lenders may differ in what they observe about rivals’ available supply. Taking information access as given, we compare settings in which all, some, or none of the lenders observe their rivals' aggregate available supply. We ask how the allocation of this information affects prices and payoffs as demand pressure, the division of a fixed total supply and the permitted fee set vary. In the case of two lenders, reassigning supply information from the larger lender to the smaller can even increase both lenders' payoffs, including that of the lender that loses access to the information. These gains arise without an increase in aggregate lending.

\smallskip \noindent \textbf{Our contribution.}
Our results concern observation-constrained pure stationary equilibria. When multiple equilibria exist, we select the equilibrium with the highest total discounted lender payoff.

\noindent\emph{Notation.} We list lenders in a fixed order. The letter $O$ means that a lender observes only its own available supply, while $A$ means that it also observes the aggregate available supply of its rivals. We refer to an $O$-lender as uninformed and an $A$-lender as informed. For two lenders with unequal supplies, we list the larger lender first. Hence, $OO$, $AO$, $OA$ and $AA$ mean that neither lender, only the larger lender, only the smaller lender and both lenders observe rival supply, respectively. The supply vector $(x,y)$, for $x,y\in \mathbb{N}$, 
means that the larger lender owns $x$ units and the smaller lender owns $y$ units.
\begin{enumerate}[nosep, left=0pt]

\item For two lenders with supplies $(2,1)$ and three permitted fees, $AO$ yields higher total lender payoff than $OA$ when borrower demand is below supply. When demand is close to or equal to the supply, with all other parameters unchanged, the ranking reverses: $OA$ yields the highest total lender payoff among all four information regimes. For the same supplies in a two-fee setting, we derive an explicit sufficient parameter region in which $OA$ has higher payoff than $AO$ and numerically map how the $AO$--$OA$ payoff difference varies with borrower arrival and return rates (Example~\ref{prop_supply_info}
and Proposition~\ref{prop:two-fee-region}; see also Figure~\ref{fig:payoff-map-regimes}).
\item For two lenders with supplies $(m,1)$, with any integer $m\geq2$, and two permitted fees whose lower fee exceeds half the higher, we establish explicit sufficient conditions 
under which $OA$ yields higher total payoff than $AO$. These conditions define a nonempty open parameter region
(Theorem~\ref{thm:general-larger-supply}).

\item The $OA$ advantage need not come at the expense of the lender that loses information. With supplies $(3,1)$, moving from $AO$ to $OA$ strictly increases both lenders' equilibrium payoffs, including that of the larger lender that loses access to rival-supply information. The reassignment is therefore a strict Pareto improvement for the two lenders (Proposition~\ref{prop:pareto-gain}).

\item Splitting an informed lender into two independent informed lenders can raise total payoff while leaving total supply unchanged (Proposition~\ref{prop:split-informed-lender}).
\end{enumerate}

\noindent Within each comparison at fixed total supply and fixed arrival, return and discount rates, aggregate lending is unchanged (Lemma~\ref{lem:lending-invariance}). The payoff differences therefore arise from fees and the allocation of loans across lenders and supply states, rather than additional lending.

\input{figure_arxiv.tex}

\smallskip \noindent \textbf{Related literature.}
Our work is closely related to dynamic pricing with reusable supply. \citet{FTS2026} study competition with reusable supply and logit demand. They compare competitive pricing when all firms condition only on their own occupancy or all condition on the full occupancy state, alongside a joint-pricing benchmark. Their numerical results show that sharing occupancy information can intensify competition. \citet{RQA2018} analyse competing providers with stochastic arrivals and departures and compute approximate Nash equilibria. Related models examine competition with supplies depleted by sales \cite{MT2011,BHAW2024,dudey1992dynamic}, demand learning by a single seller \cite{BDKS2015}, and static versus dynamic pricing for reusable resources \cite{ES2025}. We examine how the assignment of supply information interacts with unequal supplies and demand pressure. Every arrival is served whenever supply is available, and loans return independently at a common, fee-independent rate. Total lending is consequently unchanged by pricing, information and the division of supply, allowing us to isolate changes in fees and individual allocations.

\citet{GJS2023} study dynamic price competition with private supplies and show that mutual capacity-information exchange raises firms’ profits relative to private capacities. Unilateral disclosure can also raise profits and may outperform mutual exchange. Their analysis incorporates beliefs about rivals' remaining supply. \citet{SR2019} infer hidden rival inventories in finite-horizon price competition, while \citet{AFJP2020} examine inventory information sharing in dynamic auctions. We share their interest in how information affects future competition, but study recurring available supply and restrict pricing rules to current supply observations, rather than inference from the history of prices and transactions.

The consequences of disclosure depend on the information provided and the form of competition \cite{Raith1996}. \citet{GJNBN2026} show that selective provision of demand information can benefit competing sellers, while \citet{CS2018} study aggregate inventory disclosure to customers. Our signal reports total available rival supply, without identifying each rival's holdings when there are more than two lenders. We show that demand can reverse which recipient generates the higher total payoff, and that informing only the smaller lender can strictly benefit both lenders relative to informing only the larger. These results show how the allocation of supply information and the ownership of reusable units jointly shape competition.

Our work also connects to algorithmic pricing and learning. \citet{CCDP2020} show that independent Q-learning agents can learn high-price outcomes supported by reward--punishment behaviour. \citet{FPR2025} study learning in episodic markets with limited inventories, while \citet{CCP2022} examine the role of tick size in electronic-market competition. We compute equilibria in observation-constrained pure stationary strategies, consistent with the distinction between restricted and unrestricted strategy sets \cite{BV2004}. We check complete unilateral deviations within that class without specifying a learning algorithm. The payoff effects arise from supply information and pricing incentives, rather than particular exploration or updating rules; higher fees are not by themselves identified with learned collusion.

\section{Model}\label{sec:model}
We study a non-cooperative lending game in continuous time over an infinite horizon. Borrowers arrive according to a Poisson process with rate $\lambda>0$, each requesting one unit. Each loan has an independent exponentially distributed duration with rate $\mu>0$, giving an expected duration of $\frac1\mu$.\footnote{Of the applications mentioned above, uncertain end times are natural for securities lending and computing tasks. Accommodation and equipment rentals often have scheduled end dates, so the exponential-duration assumption is a simplification in those settings. It means that the rate at which units return depends only on how many are currently on loan, rather than on how long they have already been out.} Lenders compete by quoting fees, and each seeks to maximise its expected discounted payoff. We consider pure stationary strategies, which assign a quote to each observation of available supply. Lenders always observe their own available supply, but may also observe the aggregate available supply of their rivals. The allocation of this information determines which supply conditions each lender can distinguish when choosing its quote.  

\subsection{Lenders and supply}

Let $I=\{1,\ldots,n\}$ be the set of lenders, where $n\geq 2$ is the number of lenders. Each lender $i\in I$ owns $c_i\in\mathbb{N}$ identical, indivisible units, giving total supply $C=\sum_{i\in I}c_i$. The supply is a fixed parameter of each game. We distinguish a lender's supply, meaning the total number of units it owns, from its available supply, meaning the units it can currently lend. 

At time $t\geq 0$, let $x_i(t)$ denote lender $i$'s available supply. Each unit is either available or on loan, so $x_i(t)\in\{0,\ldots,c_i\}$ and $c_i-x_i(t)$ units are on loan. A lent unit remains unavailable until its loan ends, at which point it returns to the same lender. A lender with no available units cannot serve an arriving borrower. The supply state and its state space are
\[
s(t)=(x_1(t),\ldots,x_n(t))\in S,
\qquad
S=\prod_{i\in I}\{0,\ldots,c_i\}.
\]
Let $s^0=(c_1,\ldots,c_n)$ denote the starting fully available state. All equilibrium and payoff comparisons below are evaluated from $s^0$. Additionally, for each $k\in\{0,\ldots,C\}$, let $S_k$ denote the set of states with $k$ units available in total.

\noindent Lenders need not observe this full state; their information is specified below through a coarser observation of $s(t)$.

\subsection{Lending mechanism}
Let $P\subset(0,\infty)$ be the finite, nonempty set of permitted fees, common to all lenders. For a state $s=(x_1,\ldots,x_n)\in S$, let $o_i(s)$ denote lender $i$'s supply observation:
\[
o_i(s)=
\begin{cases}
x_i,
& \text{own-supply observation (O)},\\[4pt]
\left(x_i,\displaystyle\sum_{j\in I\setminus\{i\}}x_j\right),
& \text{aggregate rival-supply observation (A)}.
\end{cases}
\]
A pure stationary strategy $\sigma_i$ assigns one fee in $P$ to each observation at which lender $i$ has available supply. Fix a strategy profile $\sigma=(\sigma_i)_{i\in I}$, containing one strategy for each lender. The fee quoted by lender $i$ in state $s$ is $p_i(s)=\sigma_i(o_i(s))$ whenever $x_i>0$. The strategy depends only on that current observation: it cannot condition on time or on the history of past observations. This keeps the analysis focused on the economic effects of access to rival-supply information, rather than on learning about rival supply from market history. The available lenders are $\mathcal A(s)=\{i\in I:x_i>0\}$. When this set is nonempty, the lenders offering the lowest fee are
\[
\mathcal W(s,\sigma)
=
\left\{
i\in\mathcal A(s):
p_i(s)=\min_{j\in\mathcal A(s)}p_j(s)
\right\}.
\]
One lender is selected uniformly from $\mathcal W(s,\sigma)$. It supplies the entire requested unit and receives its quoted fee at booking. The probability that lender $i$ receives an arriving borrower is therefore
\[
a_i(s,\sigma)
=
\begin{cases}
\dfrac{1}{|\mathcal W(s,\sigma)|},
& i\in\mathcal W(s,\sigma),\\[6pt]
0,
& \text{otherwise},
\end{cases}
\]
where $|\mathcal W(s,\sigma)|$ is the number of tied lowest-fee lenders. All permitted fees are acceptable to borrowers, so every arrival is served whenever supply is available. If $\mathcal A(s)=\varnothing$, we set $\mathcal W(s,\sigma)=\varnothing$ and $a_i(s,\sigma)=0$ for every lender; the borrower leaves without a loan.
A booking reduces the selected lender's available supply by one. When the loan ends, the unit returns to that lender, increasing its available supply by one. Either event changes the supply state from $s$ to some new state $s'$.

\subsection{State transitions}
For fixed lender strategies $\sigma$, the supply state follows a continuous-time Markov chain on $S$. The exponential loan durations mean that return rates depend on the number of units currently on loan, rather than how long those loans have been outstanding. Let $e_i$ denote the vector with a one in coordinate $i$ and zeros elsewhere. From state $s=(x_1,\ldots,x_n)$, a booking by lender $i$ moves the state to $s-e_i$, while a return to that lender moves it to $s+e_i$. Denote the transition-rate matrix by $Q(\sigma)$, where $Q_{s,s'}(\sigma)$ is the rate of moving from state $s$ to a distinct state $s'$. Its booking and return entries are
\begin{align*}
Q_{s,s-e_i}(\sigma)
&= \lambda a_i(s,\sigma),
&& x_i>0,\\
Q_{s,s+e_i}(\sigma)
&= \mu(c_i-x_i),
&& x_i<c_i.
\end{align*}
The booking rate is the borrower arrival rate $\lambda$ multiplied by $a_i$, the probability that lender $i$ receives the allocation. The total return rate for a lender is the per-unit return rate multiplied by the number of outstanding units $c_i-x_i$.
All other off-diagonal entries are zero: each event changes the available supply of only one lender by one unit. The diagonal is the negative total rate of leaving the state,
\[
Q_{s,s}(\sigma)
=
-\sum_{\substack{s'\in S\\s'\neq s}}Q_{s,s'}(\sigma).
\]
An arrival finding no available supply leaves the state unchanged and is not included in this exit rate. Returns can still occur, allowing lenders to serve subsequent arrivals.
Under the fixed strategy profile $\sigma$, the probability of being in state $s'$ after an interval of length $h\to0^+$, conditional on starting in state $s$, is
\[
\Pr{s(t+h)=s' \mid s(t)=s}
\approx
\begin{cases}
hQ_{s,s'}(\sigma),
&  s'\neq s,\\[4pt]
1-h\sum_{\substack{\tilde s\in S\\\tilde s\neq s}}Q_{s,\tilde s}(\sigma)
& s'=s.
\end{cases}
\]

\subsection{Payoffs and state values}
A lender receives its quoted fee once, at booking. For a fixed strategy profile $\sigma$, lender $i$'s expected reward rate in state $s=(x_1,\ldots,x_n)$ is
\[
r_i(s,\sigma)
=
\begin{cases}
\lambda a_i(s,\sigma)p_i(s),
& x_i>0,\\[4pt]
0,
& x_i=0.
\end{cases}
\]
Let $\delta>0$ denote the continuous-time discount rate, so that a payment received at time $t$ is multiplied by the discount factor $e^{-\delta t}$. Write $V_i(s,\sigma)$ for lender $i$'s expected discounted payoff over the remaining horizon, starting from state $s$ and following the strategy profile $\sigma$. Using the expected reward rate, this value is
\[
V_i(s,\sigma)
=
\mathbb{E}\left[
\int_0^\infty e^{-\delta t}r_i(s(t),\sigma)\,\mathrm{d}t
\,\middle|\,
s(0)=s
\right],
\]
where $s$ specifies the initial state, $s(t)$ the state reached at elapsed time $t$ and the expectation is over borrower arrivals, loan returns, and random allocation among tied lenders.
Fix an ordering of the states in $S$ and define the column vectors
\[
\mathbf r_i(\sigma) := \bigl(r_i(s,\sigma)\bigr)_{s\in S} \text{ and } \mathbf V_i(\sigma):= \bigl(V_i(s,\sigma)\bigr)_{s\in S} \text{.}
\]
For fixed strategies, the payoff vector is the unique solution to
\begin{equation}
\label{eq:payoff-system}
\left[\delta\mathrm{Id}-Q(\sigma)\right]
\mathbf V_i(\sigma)
=
\mathbf r_i(\sigma),
\end{equation}
where $\mathrm{Id}$ is the identity matrix of dimension $|S|$. For each $s\in S$, define the expected discounted time spent in state $s$ under $\sigma$ by
\[
z_s(\sigma) = \int_0^\infty e^{-\delta t} \Pr{s(t)=s\mid s(0)=s^0}\,\mathrm dt.
\]
Define $\mathbf z(\sigma) = \bigl(z_s(\sigma)\bigr)_{s\in S},$ using the same state ordering as $Q(\sigma)$, and let $\mathbf b^0$ be the column vector with entry $1$ at state $s^0$ and $0$ elsewhere.
Then
\begin{equation}
\label{eq:state-time-system}
\left[
\delta\mathrm{Id}-Q(\sigma)^\top
\right]
\mathbf z(\sigma)
=
\mathbf b^0.
\end{equation}
The payoff can equivalently be written as $V_i(s^0,\sigma) = \mathbf z(\sigma)^\top\mathbf r_i(\sigma)$.

\subsection{Equilibrium}

For each lender $i\in I$, let $\Sigma_i$ denote the set of admissible pure stationary strategies: each strategy assigns one fee in $P$ to every supply observation at which that lender has an available unit. The information available to each lender is fixed by the information regime. Write $\sigma_{-i}$ for the collection of strategies used by all lenders other than $i$.
A strategy profile $\sigma^\ast=(\sigma_i^\ast)_{i\in I}$, with $\sigma_i^\ast\in\Sigma_i$, is a Nash equilibrium within this strategy class if
\[
V_i(s^0,\sigma^\ast) \geq V_i\bigl(s^0,(\widetilde{\sigma}_i,\sigma^\ast_{-i})\bigr) \qquad \text{for every }i\in I \text{ and }\widetilde{\sigma}_i\in\Sigma_i,
\]
where $(\widetilde{\sigma}_i,\sigma^\ast_{-i})$ is the profile obtained by replacing lender $i$'s strategy while leaving its rivals' strategies unchanged.
A deviation may change the lender's quotes at any or all of its permitted observations, but cannot change the supply information it observes. We then recalculate its payoff from $s^0$, including any changes in which lender serves later borrowers, the available supply after each booking and return, and the fees received thereafter. The equilibrium condition therefore compares each alternative strategy over the current reward and all subsequent bookings and returns.

Pure stationary equilibria need not exist or be unique. For a fixed game and information regime $R$, let $\mathcal{E}^R$ denote the set of pure stationary equilibria. Whenever $\mathcal{E}^R$ is nonempty, we select the one with the highest total discounted payoff:
\[
\pi^R = \max_{\sigma\in\mathcal{E}^R} \sum_{i\in I} V_i(s^0,\sigma).
\]
\noindent\emph{Sole informed lender.}\label{c:sole_lender} Because every supply state is reachable from $s^0$, an informed lender that observes zero aggregate rival available supply must quote $\max P$ in equilibrium. Raising a lower fee at that observation changes neither its allocation probability nor any transition rate and strictly increases its payoff. This conclusion need not hold for an uninformed lender, which cannot condition its fee on rival available supply.

\section{Results}\label{sec:results}
We examine how supply information affects lender payoffs, combining analytical results for reduced games with numerical comparisons across supply and demand configurations. We first establish the invariance of total lending, then examine how information assignment and the division of supply affect individual and total lender payoffs.
\begin{lemma}[Total lending invariance]\label{lem:lending-invariance}
Fix total supply $C$, the borrower arrival rate $\lambda$, the return rate per outstanding unit $\mu$, and the discount rate $\delta$. From $s^0$, total available supply $X(t)=\sum_{i\in I}x_i(t)$ has the same distribution at every time under every admissible strategy, information regime and division of supply among lenders. Consequently, for each $k\in\{0,\ldots,C\}$, the expected discounted time with $k$ units available is unchanged, as is the expected discounted number of successful bookings. In particular,
\[
B(\sigma) = \lambda\sum_{k=1}^{C}\sum_{s\in S_k}z_s(\sigma)
\]
is the expected discounted number of successful bookings and has the same value under every admissible strategy, information regime and division of supply.
\end{lemma}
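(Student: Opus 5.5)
The plan is to show that, under any fixed profile $\sigma$, the aggregate process $X(t)=\sum_i x_i(t)$ is itself a continuous-time Markov chain on $\{0,\ldots,C\}$. We will show that its generator does not depend on $\sigma$, the information regime, or the vector $(c_i)$ beyond $C$. This is a lumpability argument for the partition $\{S_k\}_{k=0}^C$ of $S$. Fix $s\in S_k$ and compute the total rate from $s$ into $S_{k-1}$ and into $S_{k+1}$ using the entries of $Q(\sigma)$.

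\emph{Downward rate.} Transitions into $S_{k-1}$ are bookings, at total rate $\sum_i \lambda a_i(s,\sigma)$. If $k\geq1$ then $\mathcal A(s)\neq\varnothing$, so $\mathcal W(s,\sigma)$ is nonempty. Since $a_i=1/|\mathcal W|$ on $\mathcal W$ and $0$ elsewhere, $\sum_i a_i(s,\sigma)=1$ and the rate is exactly $\lambda$. If $k=0$ the rate is $0$.

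\emph{Upward rate.} Transitions into $S_{k+1}$ are returns, at total rate $\sum_i\mu(c_i-x_i)=\mu(C-k)$.

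No other off-diagonal entries exist, since every event changes $X$ by $\pm1$. Both rates depend on $s$ only through $k$. By the standard lumpability criterion (Dynkin/Kemeny--Snell), $X(t)$ is therefore a birth--death chain with rates $\lambda\mathbf 1\{k\ge1\}$ down and $\mu(C-k)$ up, started at $X(0)=C$. Its law, including all finite-dimensional distributions and in particular the marginal at every $t$, is the same across all admissible $\sigma$, regimes and divisions of $C$.

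The consequences then follow directly. For each $k$, $\sum_{s\in S_k}z_s(\sigma)=\int_0^\infty e^{-\delta t}\Pr{X(t)=k}\,\mathrm dt$, which is invariant. A successful booking occurs exactly at arrivals when $X\geq1$, at rate $\lambda\mathbf 1\{X(t)\ge1\}$. The expected discounted count is thus $\mathbb E\int_0^\infty e^{-\delta t}\lambda\mathbf 1\{X(t)\geq1\}\,\mathrm dt=\lambda\sum_{k\ge1}\sum_{s\in S_k}z_s(\sigma)=B(\sigma)$, by the compensator of the booking counting process or equivalently by Fubini. Alternatively, summing \eqref{eq:state-time-system} over $s\in S_k$ gives a closed linear system for the aggregated $z$-values with $\sigma$-independent coefficients. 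This system is invertible because $\delta>0$, so it yields the same conclusion algebraically.

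The main obstacle is the care needed for the lumpability step. One must verify that the aggregated rates are genuinely state-independent within each $S_k$, which hinges on $\sum_i a_i=1$ whenever supply is available, i.e.\ that tie-breaking never loses an arrival. One must also note that the state-dependent and possibly observation-coarsened strategies affect only \emph{which} lender is debited, never the total rate. Everything else is routine.
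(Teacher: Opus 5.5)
Your proposal is correct and follows essentially the same route as the paper. You aggregate the booking and return rates over each $S_k$ and observe that they equal $\lambda\mathbf{1}\{k\ge1\}$ and $\mu(C-k)$, independently of $\sigma$, the information regime and the division of supply. You then conclude that $X(t)$ is a fixed birth--death chain started at $C$, and sum the discounted state times over $S_k$.

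Your explicit appeal to strong lumpability, and your alternative check by summing \eqref{eq:state-time-system} over $S_k$, make rigorous what the paper's proof states more informally; they do not change the argument.
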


\begin{proof}
Fix the lenders' strategies $\sigma$ and a physical state $s=(x_1,\ldots,x_n)$ with total available supply $k=\sum_{i\in I}x_i$. Every arrival is served whenever $k>0$, and each outstanding unit returns independently at rate $\mu$. The aggregate transition rates are therefore
\[
\begin{aligned}
k &\longrightarrow k-1 &&\text{at rate } \sum_{i\in I}\lambda a_i(s,\sigma)=\lambda, &&k>0,\\
k &\longrightarrow k+1 &&\text{at rate } \sum_{i\in I}\mu(c_i-x_i)=\mu(C-k), &&k<C.
\end{aligned}
\]
These rates are identical for all physical states with the same total available supply $k$. Thus $X(t)$ is a continuous-time Markov chain whose transition rates and initial condition $X(0)=C$ are independent of strategies, information and the division of supply. Grouping states by total available supply gives
\[
\sum_{s\in S_k}z_s(\sigma)
=
\int_0^\infty e^{-\delta t}
\Pr{X(t)=k\mid X(0)=C}\,\mathrm dt.
\]
Individual state times may differ, but their sum over $S_k$ is unchanged. Successful bookings occur at rate $\lambda$ whenever $X(t)>0$, giving the expression for $B(\sigma)$ and establishing its invariance.
\end{proof}
\noindent The lemma rules out increased aggregate lending as an explanation for higher payoff, directing attention instead to the fees received and the allocation of loans across lenders and supply states.

\subsection{Information assignment}\label{sec_res_supply_info}

In this section, we focus on the case $n=2$ and ask whether the same lender should receive supply information as demand changes. We hold supplies, fees and the return rate fixed, and compare information arrangements at different arrival rates. The following result shows that informing the larger lender can generate higher payoffs in regimes with lower arrival rates, while informing only the smaller lender can produce the highest payoff in regimes with higher arrival rates.

We distinguish three demand regimes based on the relationship
between borrower demand and lending supply. Let
$c_{\max}=\max_{i\in I}c_i$ denote the largest individual
lender's supply, and let $C=\sum_{i\in I}c_i$ denote total
supply. In the \emph{Bertrand region}, $\lambda\leq c_{\max}\mu$,
at least one lender has sufficient supply to accommodate
demand on average, so competition resembles unconstrained
Bertrand competition. In the \emph{Edgeworth region},
$c_{\max}\mu<\lambda\leq C\mu$, no individual lender has
sufficient supply, but aggregate supply is sufficient,
making residual demand relevant. Motivated by
Bertrand--Edgeworth competition, we call this the
\emph{Edgeworth region}. Finally, in the \emph{shortage
region}, $\lambda>C\mu$, even aggregate capacity is
insufficient to accommodate demand on average.

These regions are illustrated in
Figure~\ref{fig:main-regime-map}, which also shows
how the relative total payoffs under $OA$ and $AO$
depend on the arrival and return rates.
The regime boundaries provide economic benchmarks
rather than equilibrium boundaries.

\begin{example}[The recipient of supply information]\label{prop_supply_info}
There exist supplies $c_1>c_2$, a set $P$ of three permitted fees, a return rate $\mu>0$, a discount rate $\delta>0$, and arrival rates $0<\lambda_L<\lambda_H$ such that equilibria exist under all four information regimes and
\[ \pi^{AO}(\lambda_L)>\pi^{OA}(\lambda_L), \]
whereas
\[
\pi^{OA}(\lambda_H)> \max\left\{ \pi^{OO}(\lambda_H),\, \pi^{AO}(\lambda_H),\, \pi^{AA}(\lambda_H)\right\}.
\]
\end{example}

\begin{proof}[Proof sketch]
Choose supplies $(c_1,c_2)=(2,1)$, permitted fees
$P=\{0.75,0.875,1\}$, return rate $\mu=1$, discount rate
$\delta=0.1$, initial state $s^0=(2,1)$ and $\lambda_L=1$.
Appendix~\ref{app:lambda-one} establishes the unique equilibria
under $AO$ and $OA$. In each regime, both lenders quote
$\frac{3}{4}$ when competing. When alone, the informed lender
quotes 1 and the uninformed lender quotes $\frac{3}{4}$. The regimes induce the same transition-rate matrix and differ only
in the fee charged when a lender is alone. Write $z_{ab}$ for their
common discounted time in state $(a,b)$. Relative to $OA$, the fee
received under $AO$ is $\frac{1}{4}$ more per booking in $(1,0)$
and $(2,0)$, and $\frac{1}{4}$ less in $(0,1)$. Fees coincide when
both lenders are available. Appendix~\ref{app:example-payoffs}
shows that $z_{10}>z_{01}$, so the larger lender spends more
discounted time alone than the smaller lender. With $\lambda=1$,
\[
\pi^{AO}(1)-\pi^{OA}(1)
=
\frac{1}{4}\bigl(z_{10}+z_{20}-z_{01}\bigr)>0.
\]
Thus, the higher fee is received over more discounted time under
$AO$, although the discounted time in each state is unchanged
between the regimes.

\par\smallskip
\noindent Set $\lambda_H=3$, retaining the other parameters.
Appendix~\ref{app:lambda-three} establishes unique equilibria
under $OO$, $AO$ and $OA$, with fees
\begin{center}
\small
\setlength{\tabcolsep}{4pt}
\renewcommand{\arraystretch}{1.05}
\begin{tabularx}{\linewidth}{@{}cXX@{}}
\toprule
Regime & Lender 1 & Lender 2 \\
\midrule
$\mathrm{OO}$
    & $1$ whenever available
    & $0.875$ whenever available \\

$\mathrm{AO}$
    & $1$ when alone; $0.875$ otherwise
    & $0.875$ whenever available \\

$\mathrm{OA}$
    & $1$ whenever available
    & $1$ when alone; $0.875$ otherwise \\
\bottomrule
\end{tabularx}
\end{center}
Under $AA$, there are two equilibria. Both lenders quote 1 when
alone and at most 0.875 when competing. Under $OO$ and $OA$,
lender 2 quotes less than lender 1 whenever both are available, and
therefore receives the borrower with probability 1. Under $AO$ and
both $AA$ equilibria, their competitive fees tie, so each lender
receives the borrower with probability $\frac{1}{2}$.

\noindent For $R\in\{AO,OA\}$, let $\sigma^R$ denote the equilibrium
at this arrival rate, and write
$z_{ab}^{R}:=z_{(a,b)}(\sigma^R)$.
Appendix~\ref{app:example-payoffs} shows that
$z_{11}^{OA}<z_{11}^{AO}$, while $z_{21}^{OA}=z_{21}^{AO}$.
The only states with two units available are $(1,1)$ and $(2,0)$.
By Lemma~\ref{lem:lending-invariance}, their combined discounted
time is unchanged, so $OA$ spends more discounted time in $(2,0)$.
In that state, lender 1 is alone and receives a fee of 1. In
$(1,1)$, both lenders are available and the booking fee is 0.875
under both regimes.

\noindent Under $OA$, lender 2 also charges 1 when alone, rather
than 0.875 under $AO$. Since the expected discounted number of
successful bookings is unchanged and $\lambda=3$, these fee and
state-time differences give
\[
\pi^{OA}(3)-\pi^{AO}(3)
=
\frac{3}{8}
\bigl[z_{01}^{AO}+(z_{11}^{AO}-z_{11}^{OA})\bigr]>0.
\]
The first term in brackets reflects the higher fee when lender 2
is alone. The second reflects less discounted time in $(1,1)$,
not a higher booking fee when both lenders are available.

\noindent Under $OO$, the discounted state times are the same as
under $OA$, but lender 2 charges 0.875 when alone. Both $AA$
equilibria have the discounted state times of $AO$ and booking
fees no higher than those under $OA$. The complete comparisons
in Appendix~\ref{app:example-payoffs} therefore give
\[
\pi^{OA}(3)>
\max\{\pi^{OO}(3),\pi^{AO}(3),\pi^{AA}(3)\}.
\]
\end{proof}

\noindent The larger lender controls more supply, yet informing the smaller lender can produce higher total payoff. We next identify conditions on the borrower arrival rate $\lambda$, the return rate per outstanding unit $\mu$, and the discount rate $\delta$ under which this advantage persists beyond the preceding result. The pricing incentives depend on the trade-off between receiving a lower fee now and waiting for a higher fee later. Define
\[
\beta=\frac{\lambda}{\lambda+\delta}.
\]
This is the expected discount factor until the next borrower arrives: a fee $p_H$ received at that arrival has expected present value $\beta p_H$. More frequent arrivals make waiting less costly, while stronger discounting makes it more costly.

\begin{theorem}[Informing the smaller lender] \label{thm:general-larger-supply}
For every integer $m\ge2$ and permitted fee set $P=\{p_L,p_H\}$ with $0<\frac{p_H}{2}<p_L<p_H$, there is a nonempty open set of positive borrower arrival rates $\lambda$, return rates $\mu$, and discount rates $\delta$ for which, with supplies $(c_1,c_2)=(m,1)$, AO and OA admit pure stationary equilibria, and
\[
\pi^{OA}(\lambda)>\pi^{AO}(\lambda).
\]
\end{theorem}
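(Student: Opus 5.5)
We will prove Theorem~\ref{thm:general-larger-supply} by exhibiting explicit candidate equilibria under $AO$ and $OA$ that share the same transition-rate matrix, verifying their equilibrium property in a limiting regime, and then extending to an open set by continuity. Throughout, states are written $(a,b)$ with $a\in\{0,\dots,m\}$ and $b\in\{0,1\}$.

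\textbf{Candidate profiles.} Under $OA$, lender 1 (uninformed) quotes $p_H$ at every observation. Lender 2 (informed) quotes $p_L$ whenever lender 1 has available supply, and $p_H$ when alone, as the sole-informed-lender remark requires. Under $AO$, lender 2 (uninformed) quotes $p_L$ whenever available. Lender 1 (informed) quotes $p_H$ when alone, and $p_H$ when lender 2 is available. In both regimes, lender 2 therefore wins every arrival when it is available, so the two profiles induce the same matrix $Q$ and the same discounted state times $z_{ab}$. The payoff difference is then read off fee by fee: fees coincide in every state except $(0,1)$, where lender 2 is alone and receives $p_H$ under $OA$ but $p_L$ under $AO$. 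Hence
\[
\pi^{OA}-\pi^{AO}=\lambda(p_H-p_L)z_{01}>0,
\]
since $(0,1)$ is reachable from $s^0$ and so $z_{01}>0$. This gives the inequality at once, so the whole difficulty is equilibrium verification.

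\textbf{Equilibrium verification.} The natural regime is $\mu$ small relative to $\lambda$ (high demand pressure) and $\delta$ small relative to $\mu$. Then $\beta$ is close to $1$ and returns are the bottleneck.

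For lender 1 in either regime, consider a deviation to $p_L$ in states with $b=1$. It would win some arrivals earlier, at the lower fee. But in those states lender 2 serves the arrival anyway and then becomes unavailable for a while, during which lender 1 sells at $p_H$. Since total bookings are invariant by Lemma~\ref{lem:lending-invariance}, undercutting mainly converts lender 1's $p_H$ sales into $p_L$ sales. This should be unprofitable because $p_L<p_H$.

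For lender 2, raising to $p_H$ in a state where lender 1 is available results in a tie, and lender 2 then wins with probability $\tfrac12$. The deviation gains $(p_H-p_L)$ on the sales it makes but loses half its sales. Its unit sits idle until a later arrival, and when demand is high that delay costs about a factor $\beta$. The condition $p_L>p_H/2$ is what makes winning for sure at $p_L$ beat a $\tfrac12$ chance at $p_H$, up to the discounting and return-rate corrections.

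Because each lender has finitely many pure stationary strategies, every deviation payoff is a rational function of $(\lambda,\mu,\delta)$ obtained from \eqref{eq:payoff-system}. We would compute these payoffs in closed form in the limit, where the chain reduces to a small effective system, and check that every deviation inequality holds strictly there. Strict inequalities then persist on an open neighbourhood by continuity of the solution of \eqref{eq:payoff-system} in the parameters.

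\textbf{Main obstacle.} The hardest step is the uniform-in-$m$ verification for lender 1 under $AO$. Lender 1 has $2m$ observations $(a,b)$, so it has exponentially many deviations. We would handle this with a one-shot-deviation (policy improvement) argument for the single-agent MDP that lender 1 faces when lender 2's strategy is fixed. It then suffices to show that its value $V_1$ satisfies the local optimality inequalities at each observation. This reduces to a monotonicity bound on the differences $V_1(a,b)-V_1(a-1,b)$, i.e.\ the value of holding one more available unit, which we would bound by $p_H-p_L$ via a coupling argument in the high-demand, low-discount limit.

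If the bound $p_H$ quoted when competing under $AO$ fails for large $m$, the fallback is to let lender 1 quote $p_L$ when competing. The inequality then becomes $z_{01}$ plus a tie-share term and requires a separate sign check.
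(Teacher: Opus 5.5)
\textbf{The gap is in the equilibrium verification.} Your candidate (lender~1 quoting $p_H$ whenever available, lender~2 undercutting at $p_L$) is not an equilibrium in the regime you choose, and under AO its two incentive constraints are incompatible in both natural limits. Write $\alpha=p_L/p_H$. Under AO you need two things: (i) lender~1 does not gain by matching $p_L$ when competing, and (ii) lender~2 prefers always $p_L$ to always $p_H$.

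Your regime has $\delta\ll\mu$, so long-run revenue rates decide. Under your profile, lender~2's unit alternates between $\mathrm{Exp}(\lambda)$ availability and $\mathrm{Exp}(\mu)$ loans, so it sells at rate $\lambda\mu/(\lambda+\mu)$. Lender~1 earns $p_H$ on the remaining bookings, at rate $N-\lambda\mu/(\lambda+\mu)$, where $N$ is the total booking rate, invariant by Lemma~\ref{lem:lending-invariance}. Suppose lender~1 instead matches in every competing state. Let $\pi_2$ be the fraction of time lender~2 is available and $f>0$ the fraction of that time lender~1 is also available. Flow balance for lender~2's unit gives $\pi_2=\mu/(\mu+\lambda(1-f/2))$ and lender~1 revenue rate $p_H[N-\lambda\pi_2(1-\alpha f/2)]$. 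This beats your profile exactly when $\alpha>\lambda/(\lambda+\mu)$. For lender~2, a tie still sells its unit at hazard at least $\lambda/2$, so always quoting $p_H$ earns at least $p_H\lambda\mu/(\lambda+2\mu)$, and (ii) forces $\alpha\ge(\lambda+\mu)/(\lambda+2\mu)$. Since $(\lambda+\mu)/(\lambda+2\mu)>\lambda/(\lambda+\mu)$ whenever $\mu>0$, no ratio $\lambda/\mu$ works for small $\delta$.

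In your high-demand corner it is (ii) that fails. When lender~2's unit returns, lender~1 is almost always empty, so quoting $p_H$ costs almost nothing. Your heuristic that the tie makes lender~2 ``lose half its sales'' is the error. A lost arrival only postpones the sale, and with $\beta$ near $1$ postponement is nearly free, so $p_L>p_H/2$ plays no role there.

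The opposite limit fails too. At $\mu=0$, lender~1's one-shot gain from matching at $(m,1)$ is $\frac{p_H}{2}(\alpha-\beta)$. Meanwhile (ii) needs $\alpha\ge g_m(\beta)$, where $g_0=1$ and $g_j=\frac12+\frac\beta2 g_{j-1}$, and $g_m(\beta)>\beta$. So neither limit supports your plan of verifying there and extending by continuity. The obstacle you flag, lender~1's many deviations, is routine via the one-shot (policy-improvement) identity. You would, however, also need uniqueness of the AO equilibrium, since $\pi^{AO}$ is a maximum over equilibria.

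\textbf{What the paper does instead.} It uses a different pair of candidates and the opposite regime. Every competing state is a tie in both regimes, so AO and OA share $Q$. Under AO, lender~1 matches $p_L$ when competing and lender~2 always quotes $p_L$; this is your fallback. Under OA, the informed small lender uses its observation of $x_1$: it ties at $p_H$ whenever $x_1<m$ and matches $p_L$ only at $x_1=m$. The uninformed lender~1 quotes $p_L$ at $x_1=m$ and $p_H$ below. Parameters are chosen with $\mu/\delta$ small and $g_m(\beta)<\alpha<g_{m-1}(\beta)$, and the hypothesis $p_L>p_H/2$ is used to guarantee that such $\beta$ exists. Returns then enter as a perturbation of order $\mu/\delta$. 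The payoff comparison reduces to $z_{m-1,1}>z_{m,0}$, obtained from the balance equations, rather than to $z_{01}>0$.

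Your fallback alone does not close the gap. If lender~2 still undercuts under OA, the two transition matrices differ. For small $\mu/\delta$ that OA candidate also fails: lender~2 prefers tying at $(1,1)$ unless $\alpha\ge(1+\beta)/2$, while lender~1 keeps $p_H$ at $x_1=m$ only if $\alpha\le 2\beta/(1+\beta)$, which is strictly smaller for $\beta<1$.
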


\begin{proof}
\noindent\textbf{1. Pricing incentives when units do not return.}
We construct the required parameter region by first calculating
pricing incentives when units do not return, and then choosing
positive values of $\mu$ that preserve these incentives.
Temporarily set $\mu=0$. Available supply can then only fall:
once lender 1 has fewer than $m$ units available, it cannot regain
all $m$. Its quote at $x_1=m$ therefore cannot affect its payoff
starting with fewer than $m$ available units.

\noindent Let $\alpha=\frac{p_L}{p_H}$. To calculate the benefit of matching a high
quote, consider a borrower arriving when lender 2 has its unit
available and lender 1 has $j$ available units, with $1\le j\le m$.
For this calculation, fix lender 1's strategy to quote $p_H$
whenever available. Undercutting at $p_L$ gives lender 2 that fee
immediately and ends its lending, since its only unit does not
return. Alternatively, lender 2 can continue quoting $p_H$ until
its unit is lent.

\noindent Write $p_Hg_j(\beta)$ for lender 2's expected discounted payoff
from this second choice, measured from the current arrival.
With probability $\frac{1}{2}$, lender 2 receives $p_H$ immediately.
Otherwise, lender 1 receives the borrower, its available supply
falls to $j-1$, and lender 2 waits for the next arrival. Hence
\[
g_0(\beta)=1,
\qquad
g_j(\beta)=\frac{1}{2}
          +\frac{\beta}{2}g_{j-1}(\beta),
\qquad j=1,\ldots,m.
\]
The boundary $g_0(\beta)=1$ applies when lender 1 has no available
unit, so lender 2 receives the current borrower at $p_H$.
Lender 2's payoff gain from continuing at $p_H$ rather than
undercutting is therefore
\[
p_Hg_j(\beta)-p_L
=
p_H\bigl(g_j(\beta)-\alpha\bigr).
\]

\noindent For the construction, we choose $\lambda$ and $\delta$ so that
undercutting is preferable in this calculation when $j=m$, whereas
continuing at $p_H$ is preferable for every $1\le j<m$.
The functions $g_j(\beta)$ decrease with $j$, as shown in
Appendix~\ref{app:theorem1}. It therefore suffices to choose
\[
g_m(\beta)<\alpha<g_{m-1}(\beta).
\]
Indeed, $g_j(\beta)\ge g_{m-1}(\beta)>\alpha$ for every $1\le j<m$, so checking $m-1$ covers every smaller positive rival available supply. We establish at the end of the proof that such choices exist for every supply and fee set in the theorem.

\noindent\textbf{2. Best responses under AO.}
Continue with $\mu=0$. Under AO, lender 2 observes only its own
available supply and owns one unit. Its two admissible strategies
are therefore to quote $p_L$ whenever available or to quote $p_H$
whenever available. We compute lender 1's best response to each
strategy, then check lender 2's choice against each resulting
lender-1 strategy.

\noindent Appendix~\ref{app:theorem1} establishes that, against lender 2's
constant quote $p_L$, lender 1 maximises its payoff by matching
$p_L$ whenever both lenders have available units. Against lender
2's constant quote $p_H$, lender 1 instead matches $p_H$ when
$1\le x_1<m$ and undercuts at $p_L$ when $x_1=m$. These latter
choices follow from the signs of $g_{x_1}(\beta)-\alpha$ established
in Step~1; the appendix derives the corresponding payoff gain
for lender 1. In both responses, lender 1 quotes $p_H$ whenever
it is the only lender with available units.

\noindent Holding each of these lender-1 strategies fixed, the appendix
evaluates lender 2's payoff from $s^0=(m,1)$ under its two
constant quotes. In both cases, lender 2 earns strictly more by
quoting $p_L$. We therefore retain lender 2's constant quote
$p_L$ and lender 1's response of matching when competing and
quoting $p_H$ when alone. Step~4 establishes that these strategies
form the unique AO equilibrium for the positive values of $\mu$
chosen there.

\noindent\textbf{3. Best responses under OA.}
Continue with $\mu=0$. We test strategies in which each available
lender quotes $p_L$ when $x_1=m$ and $p_H$ when $x_1<m$.
Both lenders observe $x_1$ under OA, so these strategies are
admissible. We verify each lender's best response while holding
its rival's proposed strategy fixed.

\noindent For lender 2, matching $p_H$ rather than undercutting gives a
payoff gain of $p_H(g_{x_1}(\beta)-\alpha)>0$ at a borrower
arrival whenever $1\le x_1<m$. At $(m,1)$, lender 1 instead
quotes $p_L$. Appendix~\ref{app:theorem1} shows that lender 2
prefers matching $p_L$ to quoting $p_H$ and letting lender 1
receive the borrower. Lender 2's quote $p_H$ when alone is
also optimal. These choices give its proposed strategy.

\noindent For lender 1, the quote at own available supply $m$ must be
the same in $(m,0)$ and $(m,1)$. Fix this common quote to
either $p_L$ or $p_H$. Under either choice,
Appendix~\ref{app:theorem1} verifies that lender 1 optimally
quotes $p_H$ whenever $1\le x_1<m$, both when lender 2 is
available and when it is not. It therefore remains to choose
between the two complete strategies that differ only in their
quote at $x_1=m$.

\noindent Starting from $s^0=(m,1)$, lender 1's payoff gain from choosing
$p_L$ rather than $p_H$ in both $(m,0)$ and $(m,1)$ is
\[
\frac{p_H\beta}{2}
\bigl[
(1+\beta)(\alpha-g_m(\beta))
+(1-\beta)(1-g_m(\beta))
\bigr]>0.
\]
Both terms in brackets are positive because
$\alpha>g_m(\beta)$ and $g_m(\beta)<g_0(\beta)=1$.
This establishes lender 1's proposed best response, including
the requirement that its quote cannot depend on lender 2's
availability. 

\noindent\textbf{4. Positive returns and equilibrium strategies.}
We now allow units to return, keeping $\lambda$ and $\delta$
as chosen in Step~1. Define
\[
\eta=
\min\bigl\{
\alpha-g_m(\beta),\,
g_{m-1}(\beta)-\alpha
\bigr\}>0.
\]
This is the smaller of lender 2's payoff gains from undercutting
at $x_1=m$ and matching at $x_1=m-1$ in Step~1, divided by
$p_H$. Appendix~\ref{app:theorem1} uses $\eta$ to bound the
payoff gains establishing the best responses in Steps~2 and~3,
then bounds how much introducing returns can change those gains.

\noindent Choose the return rate per outstanding unit to satisfy
\[
0<\frac{\mu}{\delta}
<
\frac{(1-\beta)\eta}
     {2(m+1)(1-\beta^m)}.
\]
\noindent The appendix shows that every required payoff gain remains
positive under this restriction. For lender 1 under OA, this
includes verifying its quotes with $x_1<m$ under either choice
of its common quote in $(m,0)$ and $(m,1)$, and then choosing
between those two complete strategies. All continuation payoffs
include the possibility of returning to $x_1=m$.

\noindent Consequently, the best responses from Steps~2 and~3 establish
the unique AO equilibrium, denoted $\sigma^{\mathrm{AO}}$,
and an OA equilibrium, denoted $\sigma^{\mathrm{OA}}$.
Their quotes are summarised below; each entry applies only
when the corresponding lender has an available unit.
\begin{center}
\renewcommand{\arraystretch}{1.15}
\begin{tabularx}{\linewidth}{@{}lXX@{}}
\toprule
Eq. & Lender 1 & Lender 2 \\
\midrule
$\sigma^{\mathrm{AO}}$
& $p_H$ when alone; $p_L$ otherwise
& $p_L$ whenever available \\
$\sigma^{\mathrm{OA}}$
& $p_H$ if $1\le x_1<m$; $p_L$ if $x_1=m$
& $p_H$ if $x_1<m$; $p_L$ if $x_1=m$ \\
\bottomrule
\end{tabularx}
\end{center}

\medskip
\noindent\textbf{5. Total lender payoff.}
The equilibria from Step~4 give the same allocation probabilities
in every state, since quotes tie whenever both lenders are
available. Return transitions are also identical. Their
transition matrices therefore coincide, giving the same expected
discounted time in each state from $s^0=(m,1)$. Using the
previously defined $z_s(\sigma)$, write $
z_{x_1,x_2}
:=
z_{(x_1,x_2)}(\sigma^{\mathrm{AO}})
=
z_{(x_1,x_2)}(\sigma^{\mathrm{OA}})$.

\noindent Relative to $\sigma^{\mathrm{AO}}$, the fee received per booking
under $\sigma^{\mathrm{OA}}$ is $p_H-p_L$ higher in states
$(x,1)$ with $0\le x<m$, and $p_H-p_L$ lower only in $(m,0)$.
Fees are unchanged elsewhere. It suffices to show that
$z_{m-1,1}>z_{m,0}$: the fee increase in $(m-1,1)$ then
outweighs the fee decrease in $(m,0)$, even before including
the increases in the other states.

\noindent In both $(m-1,1)$ and $(m,0)$, bookings occur at borrower
arrival rate $\lambda$, and the single outstanding unit
returns at rate $\mu$. The discounted balance equations are
\[
(\delta+\lambda+\mu)z_{m-1,1}=
\frac{\lambda}{2}z_{m,1}
+\mu z_{m-1,0}
+2\mu z_{m-2,1} \text{ and }
(\delta+\lambda+\mu)z_{m,0} =
\frac{\lambda}{2}z_{m,1}
+\mu z_{m-1,0}.
\]
The common terms record entry from $(m,1)$ through a booking
and from $(m-1,0)$ through a return. The additional term
$2\mu z_{m-2,1}$ arises because either of lender 1's two
outstanding units in $(m-2,1)$ can return, bringing the state
to $(m-1,1)$. There is no corresponding return into $(m,0)$,
since lender 2 owns only one unit.

Subtracting the equations gives
\[
z_{m-1,1}-z_{m,0}
=
\frac{2\mu}{\delta+\lambda+\mu}z_{m-2,1}>0.
\]
The inequality is strict because lender 1 can receive the first two borrowers before either unit returns, giving positive expected discounted time in $(m-2,1)$. 

\noindent The total payoff difference between $\mathrm{OA}$ and $\mathrm{AO}$ is bounded below by
\[
\pi^{OA}(\lambda)-\pi^{AO}(\lambda) \ge
\sum_{i=1}^{2}
\bigl[
V_i(s^0,\sigma^{\mathrm{OA}})
-
V_i(s^0,\sigma^{\mathrm{AO}})
\bigr] = \lambda(p_H-p_L)
\left[
\sum_{x=0}^{m-2}z_{x,1}
+\frac{2\mu}{\delta+\lambda+\mu}z_{m-2,1}
\right]>0.
\]

\medskip
\noindent\textbf{Nonempty parameter region.}
It remains to verify that the choices in Steps~1 and~4 are
possible for every supply and fee set in the theorem.
The equations $g_0(\beta)=1$ and  $g_j(\beta)=\frac{1}{2} +\frac{\beta}{2}g_{j-1}(\beta)$ show by induction that, for $1\le j\le m$, $g_j(\beta)$ is
continuous and strictly increasing in $\beta$, from
$\frac{1}{2}$ at $\beta=0$ to $1$ at $\beta=1$. For every $\alpha\in(\frac{1}{2},1)$, there is therefore a
value $\beta\in(0,1)$ satisfying $g_{m-1}(\beta)=\alpha$.
At that value, $g_m(\beta)<\alpha$. A sufficiently small
increase in $\beta$ gives
\[
g_m(\beta)<\alpha<g_{m-1}(\beta).
\]
Thus the choice required in Step~1 holds on a nonempty
interval of $\beta$. For any such $\beta$, choose any $\delta>0$ and set
\[
\lambda=\frac{\beta\delta}{1-\beta}.
\]
Then $\eta>0$, so the upper bound in Step~4 permits positive
values of $\mu$. All restrictions are strict and depend
continuously on $(\lambda,\mu,\delta)$. They therefore define
a nonempty open set of positive parameter triples for which
the stated equilibria and payoff inequality hold.
\end{proof}

\noindent For supplies $(2,1)$, direct equilibrium verification gives sufficient conditions for the same equilibrium quotes without restricting the ratio $\frac{\mu}{\delta}$.

\begin{proposition}[Two fees and supplies $(2,1)$] \label{prop:two-fee-region}
Consider supplies $(c_1,c_2)=(2,1)$, initial state $s^0=(2,1)$, and permitted fees $P=\{p_L,p_H\}$ with $0<p_L<p_H$. Suppose $\mu,\delta>0$ and
\[
0.74\le\frac{p_L}{p_H}\le0.76,
\qquad
\frac{13}{10}(\mu+\delta)\le\lambda
\le\frac{11}{8}(\mu+\delta).
\]
Then AO and OA each have a unique pure stationary equilibrium, and $\pi^{OA}(\lambda)>\pi^{AO}(\lambda)$.
\end{proposition}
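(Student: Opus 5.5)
Work directly with the finite game for supplies $(2,1)$, where $S=\{0,1,2\}\times\{0,1\}$ has six states, and verify the candidate equilibria from Theorem~\ref{thm:general-larger-supply} with $m=2$ by exact payoff calculations, not by the small-$\mu$ perturbation used there. Normalise $p_H=1$ and write $\alpha=p_L$. Both sides of every comparison then scale linearly in fees, and by dividing the balance equations by $\mu+\delta$ the conditions depend only on $\alpha$ and on the ratios $\lambda/(\mu+\delta)$ and $\mu/(\mu+\delta)$.

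The candidates are as follows. Under AO, lender 1 quotes $p_H$ in $(1,0)$ and $(2,0)$ and $p_L$ in $(1,1)$ and $(2,1)$, while lender 2 quotes $p_L$ whenever available. Under OA, lender 1 quotes $p_L$ at $x_1=2$ and $p_H$ at $x_1=1$; lender 2 quotes $p_L$ in $(2,1)$, $p_H$ in $(1,1)$, and $p_H$ in $(0,1)$ by the sole-informed-lender remark.

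\textbf{Step 1: equilibrium verification.} The strategy sets are small. Under AO, lender 2 has two strategies and lender 1 has $2^4=16$. Under OA, lender 1 has four strategies (one quote per value of $x_1$), and lender 2 has $2^3$ strategies, reduced to $2^2$ by the sole-lender remark. For each candidate I would solve the linear system~\eqref{eq:payoff-system} symbolically. It is sparse: returns move toward $(2,1)$ and bookings move away from it, so the values can be computed by back-substitution, starting from $(0,0)$. I would then write each unilateral deviation gain as a rational function of $(\alpha,\lambda,\mu,\delta)$. Once the parameters are rescaled, the denominators are positive, so each sign condition becomes a polynomial inequality. I would show that each polynomial has the required sign on the box $[0.74,0.76]\times[13/10,11/8]\times[0,1]$, where the last coordinate is $\mu/(\mu+\delta)$. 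Two methods are available: monotonicity in each variable, which reduces the check to corner evaluation, or interval bounds on the coefficients.

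\textbf{Step 2: uniqueness.} For uniqueness I would check every profile in each regime, not only the candidates. For each strategy of the smaller player, compute the other player's best responses and test whether the original strategy is a best response in return. The key comparisons are the ones behind Step~1 of the Theorem: the continuation index $g_j(\beta)$ compared with $\alpha$, now computed with returns. With $\mu=0$, the condition $\lambda/(\mu+\delta)\in[1.3,1.375]$ puts $\beta$ near $0.57$--$0.58$. Then $g_1=\frac12+\frac\beta2\approx0.785>\alpha$ and $g_2=\frac12+\frac\beta4(1+\beta)\approx0.725<\alpha$. Presumably the authors chose the stated box so that these strict inequalities survive the return terms for every value of $\mu/(\mu+\delta)$. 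I would therefore express each deviation gain as its $\mu=0$ value plus a correction term linear in $\mu/(\mu+\delta)$, and then bound that correction on the box.

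\textbf{Step 3: payoff comparison.} In both equilibria the quotes tie whenever both lenders are available, so Step~5 of the Theorem applies unchanged with $m=2$. This gives
\[
\pi^{OA}-\pi^{AO}=\lambda(p_H-p_L)\Bigl[z_{0,1}+\tfrac{2\mu}{\delta+\lambda+\mu}z_{0,1}\Bigr]>0.
\]
Uniqueness makes this an equality rather than a lower bound, since the equilibrium selection then has no other equilibrium to choose from.

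The main obstacle is Step~2. Steps~1 and~3 only involve the candidate profiles, but uniqueness requires ruling out every other equilibrium, and the margins in the box are thin: $\alpha$ is near $0.75$, and $g_2$ and $g_1$ straddle it by only a few hundredths. I would first use dominance arguments to cut down the profiles that need checking. At $x_1=1$ under OA, quoting $p_H$ beats undercutting by the margin $g_1-\alpha\approx0.035$, corrected for returns. Only after that reduction would I carry out the exact polynomial sign checks on the remaining profiles.
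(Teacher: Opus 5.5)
Your overall route matches the paper's. You enumerate every profile left after fixing informed lenders' alone-quotes at $p_H$, compute each unilateral deviation gain exactly from~\eqref{eq:payoff-system}, certify its sign on the region, and then reuse Step~5 of Theorem~\ref{thm:general-larger-supply} with $m=2$. Your Step~3 identity agrees with the paper's $\lambda(p_H-p_L)z_{0,1}\frac{\delta+\lambda+3\mu}{\delta+\lambda+\mu}$.

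The verification is essentially the whole proof, however, and several of the tools you propose for it would fail as stated.

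\textbf{Solving the system.} It is not triangular once $\mu>0$. From $(0,0)$, returns lead to $(1,0)$ and $(0,1)$, and there are cycles such as $(2,1)\to(1,1)\to(2,1)$. Back-substitution from $(0,0)$ therefore works only at $\mu=0$, and you must solve the coupled $6\times 6$ system.

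\textbf{Certifying signs.} Corner evaluation is justified only in $\alpha$. For a fixed low/high pattern, $Q(\sigma)$ does not depend on $\alpha$ and $\mathbf r_i$ is affine in it. Each gain is therefore affine in $\alpha$ over an $\alpha$-free denominator, and checking $\alpha\in\{0.74,0.76\}$ suffices. Nothing gives monotonicity in $\lambda/(\mu+\delta)$ or $\mu/(\mu+\delta)$, so corner evaluation proves nothing in those coordinates. The paper instead substitutes $r=\lambda/\mu=(1+d)\frac{52+55u}{40(1+u)}$, with $d=\delta/\mu>0$ and $u\ge 0$. It then checks that each cleared numerator has nonnegative coefficients in $(d,u)$, and handles the endpoint $r=\frac{11}{8}(1+d)$ separately.

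\textbf{Your box.} It is closed in $\mu/(\mu+\delta)$, which causes problems at both ends. At $1$, i.e.\ $\delta=0$, $\det(\delta\mathrm{Id}-Q)=0$, so your denominators are not positive. At $0$, some states become unreachable under some profiles. For example, under OA, if lender~1 quotes $p_H$ at $x_1=2$ and lender~2 undercuts in $(2,1)$, then lender~2's quote in $(1,1)$ never matters. Some required strict gains then vanish, so strict positivity must be certified on the open range.

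\textbf{Perturbation in Step~2.} The perturbative reading cannot cover this region. The proposition places no bound on $\mu/\delta$. The correction to a $\mu=0$ gain is a rational function, not a term linear in $\mu/(\mu+\delta)$. Bounds of the Theorem's type scale like $\mu/\delta$ and blow up as $\mu/(\mu+\delta)\to 1$.

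\textbf{Dominance at $x_1=1$ under OA.} The reduction you propose is false state by state. The margin $g_1-\alpha$ applies only when lender~2 quotes $p_H$ in $(1,1)$. Suppose lender~2 quotes $p_L$ there, and take $\mu=0$ with $p_H$ used thereafter. Then quoting $p_L$ in $(1,1)$ beats $p_H$ by $\frac{\alpha-\beta}{2}>0$, since $\alpha>\beta$ on the region. Quoting $p_H$ at $x_1=1$ wins only in the complete-strategy comparison from $s^0$, because the same quote also applies when lender~1 is alone in $(1,0)$. This is how the paper handles lender~2's competitive quotes $(p_L,p_L)$.

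Any pruning must therefore come from exact whole-strategy comparisons. With those and a valid positivity certificate on the open parameter range, your plan becomes the paper's proof.
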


\begin{proof}
Appendix~\ref{app:two-fee-region} computes the lenders' best responses, yielding the unique equilibria described in the proof of Theorem~\ref{thm:general-larger-supply} with $m=2$. Let $z_{0,1}$ denote their common expected discounted time in state
$(0,1)$. The payoff identity in that proof gives
\[
\pi^{OA}(\lambda)-\pi^{AO}(\lambda)
=\lambda(p_H-p_L)z_{0,1}
 \frac{\delta+\lambda+3\mu}{\delta+\lambda+\mu}>0,
\]
since $(0,1)$ is reachable from $(2,1)$.
\end{proof}

\noindent Proposition~\ref{prop:two-fee-region} establishes a sufficient
parameter region in which $OA$ yields higher total lender payoff than
$AO$. Figure~\ref{fig:main-regime-map}b illustrates this region for
$\delta=0.1$.

\subsubsection{Numerical comparison}
\label{par:numerical-comparison}

Motivated by the payoff reversal illustrated in Example~\ref{prop_supply_info}, we next examine the $AO$--$OA$ payoff difference numerically over a
broader range of borrower arrival and return rates. 
Figure~\ref{fig:main-regime-map}a presents the results for supplies
$(c_1,c_2)=(2,1)$, permitted fees $P=\{0.75,1\}$ and discount rate
$\delta=0.1$. We vary the borrower arrival rate $\lambda$ and return
rate per outstanding unit $\mu$ independently over
$\{0.1,0.2,\ldots,5.0\}$. At each parameter pair, we compute payoffs
from~\eqref{eq:payoff-system} and identify pure stationary equilibria
by checking every admissible unilateral deviation, accepting payoff
gains of at most $10^{-10}$.

\subsubsection{Pareto improvements}
\label{sec:pareto-improvements}
We now turn our attention to whether asymmetric information can
generate a Pareto improvement, meaning that both lenders obtain
strictly higher payoffs.

\begin{proposition}[Both lenders gain from informing one] \label{prop:pareto-gain}
There exists a lending game with unequal supplies $c_1>c_2$ in which informing only the smaller lender rather than only the larger yields a strict Pareto improvement for the lenders. The unique equilibria $\sigma^{AO}$ and $\sigma^{OA}$ satisfy
\[
V_i(s^0,\sigma^{OA})>V_i(s^0,\sigma^{AO}), \qquad i\in\{1,2\}.
\]
\end{proposition}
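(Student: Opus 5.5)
The proof is by explicit construction, reusing the equilibria built in the proof of Theorem~\ref{thm:general-larger-supply}. We take supplies $(c_1,c_2)=(3,1)$, matching the introduction, with $m=3$ and a two-fee set $P=\{p_L,p_H\}$ with $\frac{p_H}{2}<p_L<p_H$. We choose $\beta$ so that $g_3(\beta)<\alpha<g_2(\beta)$, and $\mu/\delta$ below the bound of Step~4. Theorem~\ref{thm:general-larger-supply} then gives $\sigma^{AO}$ as the unique AO equilibrium and $\sigma^{OA}$ as an OA equilibrium. Uniqueness under OA is not given by the theorem, so we will also check, for the concrete numbers chosen, that no other OA profile is an equilibrium. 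The OA game has finitely many pure stationary profiles, so this can be done by exhaustive deviation checks, solving \eqref{eq:payoff-system} for each profile. In practice we fix concrete values, for instance $P=\{0.75,1\}$, $\delta=0.1$, $\lambda$ from $\beta$, and small $\mu$, and verify the inequalities exactly with rational arithmetic.

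By Step~5, the two equilibria induce the same transition matrix, so the state times $z_{x_1,x_2}$ coincide. Each lender's payoff difference is therefore a linear combination of common state times, weighted by its fee changes.

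\textbf{Lender 2.} It charges $p_H$ instead of $p_L$ in $(x,1)$ for $x<3$ and is otherwise unchanged. Hence
\[
V_2(s^0,\sigma^{OA})-V_2(s^0,\sigma^{AO})=\lambda(p_H-p_L)\Bigl[z_{0,1}+\tfrac12(z_{1,1}+z_{2,1})\Bigr]>0,
\]
where the factor $\frac12$ comes from the tie when both lenders are available.

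\textbf{Lender 1.} It gains $\frac{\lambda}{2}(p_H-p_L)$ in $(1,1)$ and $(2,1)$, and loses $\lambda(p_H-p_L)$ in $(3,0)$. Its quotes at $(1,0)$ and $(2,0)$ are $p_H$ in both equilibria. We therefore need
\[
\tfrac12\bigl(z_{1,1}+z_{2,1}\bigr)>z_{3,0}.
\]

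This inequality is the main obstacle. Using the balance equation from Step~5 with $m=3$ gives $z_{2,1}=z_{3,0}+\frac{2\mu}{\delta+\lambda+\mu}z_{1,1}$, so the condition becomes
\[
z_{1,1}\Bigl(1+\frac{2\mu}{\delta+\lambda+\mu}\Bigr)>z_{3,0}.
\]
Heuristically this is plausible for small $\mu$. From $(3,1)$, one tied arrival reaches $(3,0)$ or $(2,1)$ with probability $\frac12$ each. Reaching $(1,1)$ requires two further lender-1 wins, or returns, so $z_{1,1}$ is roughly $\frac14\beta^2$ times the mass of $(3,1)$, while $z_{3,0}$ is roughly $\frac12\beta$ times it before decay. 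So the inequality may well fail in the $\mu\to0$ regime.

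If it fails, we would exploit the freedom in $\mu$ beyond the Theorem's bound, or pick $\beta$ close to $1$ where discounting matters little. Alternatively, we would verify a specific numerical instance directly. With larger $\mu$, returns into $(2,1)$ and $(1,1)$ from lower states grow, while $(3,0)$ is fed only from $(3,1)$ and $(2,0)$. The plan is therefore a concrete parameter choice, possibly with larger $\mu/\delta$, checked by direct best-response verification as in Proposition~\ref{prop:two-fee-region}. We then solve the linear system \eqref{eq:state-time-system} on the eight states exactly to confirm both payoff inequalities and uniqueness of both equilibria.
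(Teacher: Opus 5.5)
Your main route cannot produce the required example. Inside the Theorem's equilibria, the lender-1 condition you isolate does not merely ``may fail''. At $\mu=0$ one has $z_{3,1}=\frac{1}{\lambda+\delta}$, $z_{3,0}=z_{2,1}=\frac{\beta}{2}z_{3,1}$ and $z_{1,1}=\frac{\beta^2}{4}z_{3,1}$, so
\[
\tfrac12\bigl(z_{1,1}+z_{2,1}\bigr)-z_{3,0}=\frac{\beta(\beta-2)}{8(\lambda+\delta)}<0
\]
for every $\beta\in(0,1)$; taking $\beta$ close to $1$ does not help. Step~4 forces $\mu/\delta<\frac{\eta}{8(1+\beta+\beta^2)}$ with $\eta\le\frac{(g_2-g_3)}{2}=\frac{(1-\beta)\beta^2}{16}$. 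A perturbation bound analogous to Step~4's, $\|\mathbf z^\mu-\mathbf z^0\|_1\le 8\mu/\delta^2$, then shows that the change in this combination is less than a sixth of the gap. So on the entire region where the Theorem certifies $\sigma^{OA}$ for $m=3$, the larger lender is strictly worse off under OA. There, its forced low quote in $(3,0)$ always outweighs its half-share gains in $(1,1)$ and $(2,1)$.

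Your fallback, increasing $\mu$, leaves the region where that equilibrium structure is certified. Yet your bookkeeping (common transition matrix, Step~5 balance equations) still presumes that structure, and it fails at the natural instance. Take $P=\{0.75,1\}$, $\delta=0.1$, $\lambda=2$, $\mu=1$, which is exactly the paper's choice. The Theorem-shaped OA profile gives lender 1 about $11.88>11.7404$, so your inequality does hold there. But that profile is not an equilibrium, because lender 1 gains further by quoting $1$ at $x_1=3$. The unique OA equilibrium has lender 1 quoting $1$ at every own supply and lender 2 quoting $0.75$ only in $(3,1)$. Lender 2 therefore wins that booking with probability $1$ rather than $\frac12$, the two equilibria do not share state times, and your decomposition does not apply. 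The paper's proof is simply this concrete instance. It uses exhaustive best-response tables in both regimes, which also establish uniqueness, and exact payoffs $11.7404\to11.9532$ and $4.0717\to5.4552$. Since you never fix and verify an instance, the existence claim is not established. The missing idea is that the Pareto gain comes from a different OA equilibrium. In it, the larger lender prices high throughout and cedes the first booking at full availability, with returns fast enough relative to arrivals to make this worthwhile.
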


\begin{proof}
Choose supplies $(c_1,c_2)=(3,1)$, fee set $P=\{0.75,1\}$, arrival rate $\lambda=2$, return rate $\mu=1$ and discount rate $\delta=0.1$. The unique equilibria under $AO$ and $OA$ give the following payoffs.
\[
\begin{array}{c|rr}
 & V_i(s^0,\sigma^{AO}) & V_i(s^0,\sigma^{OA})\\ \hline 
\text{Lender 1} & 11.7404 & 11.9532 \\
\text{Lender 2} & 4.0717 & 5.4552
\end{array}.
\]
Both payoff differences are strictly positive. The equilibrium verification and exact payoff comparisons are given in Appendix~\ref{app:pareto}.
\end{proof}

\subsection{Ownership fragmentation} \label{sec:dividing-supply}

We split lender 2 in the $(2,2)$ OA game into two independent informed lenders with one unit each, giving $(2,1,1)$ under OAA. The following proposition shows that this split can raise total equilibrium payoff without increasing lending. Figure~\ref{fig:splitting-region} illustrates
the sufficient parameter region established by the
proposition.

\begin{proposition}[Higher total payoff after splitting a lender] \label{prop:split-informed-lender}
Let $P=\{p_L,p_H\}$, with $0<p_L<p_H$, and suppose
\[
0<\delta\le\mu,
\qquad
4(\mu+\delta)\le\lambda\le4(\mu+\delta)+\frac{\mu}{5},
\qquad
0.899\le\frac{p_L}{p_H}\le0.901.
\]
For supplies $(2,2)$ under OA and $(2,1,1)$ under OAA, the pure stationary equilibria yield
\[
\pi^{\mathrm{OAA}}(\lambda)>\pi^{\mathrm{OA}}(\lambda).
\]
\end{proposition}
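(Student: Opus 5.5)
We follow the template of Theorem~\ref{thm:general-larger-supply} and Proposition~\ref{prop:two-fee-region}. First we identify the equilibria in the two games. Then we show that the two games induce comparable discounted state times. Finally we compare fees state by state, using Lemma~\ref{lem:lending-invariance} to keep total lending fixed.

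\textbf{Step 1: the OA game with supplies $(2,2)$.} Lender 1 observes only $x_1$. Lender 2 observes $(x_2,x_1)$. The sole-informed-lender remark forces lender 2 to quote $p_H$ at $x_1=0$. With $\lambda\approx4(\mu+\delta)$ we have $\beta=\lambda/(\lambda+\delta)$ close to $1$, and $\alpha\approx0.9$. The candidate equilibrium is the analogue of $\sigma^{OA}$: lender 2 matches lender 1 when lender 1 is full, and otherwise prices high when the residual-demand incentive favours waiting. We would write down the finitely many admissible strategies of each lender. Lender 1 has $2^2$ strategies and lender 2 has $2^{6}$. For each lender we would solve the linear system \eqref{eq:payoff-system} symbolically in $(\lambda,\mu,\delta,\alpha)$ against the candidate rival strategy. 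Each deviation gain can then be expressed as a rational function. Its sign must be checked on the stated box. We would reduce the variables by normalising $p_H=1$ and $\mu+\delta=1$, and write $\lambda=4+t\mu/5$ with $t\in[0,1]$ and $\delta/\mu\in(0,1]$. This leaves a compact three-parameter box. If several equilibria exist, we take the payoff-maximising one, as $\pi^{OA}$ prescribes.

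\textbf{Step 2: the OAA game with supplies $(2,1,1)$.} This is the same computation with three lenders. The two unit lenders each observe their own unit and the aggregate of the other two lenders. By the sole-lender remark, each informed lender quotes $p_H$ when its aggregate rival supply is zero. The key feature should be the following. A unit lender that sees only the other unit lender remaining (rival aggregate $1$ with $x_1=0$) can be induced to match $p_H$. Each unit owner then faces a one-step version of the $g_j$ recursion with $g_1=\tfrac12+\tfrac\beta2>\alpha$ near $\beta\approx1$. The merged lender 2 in $(2,2)$, by contrast, internalises both units. The $g_j$ recursion from Step~1 of Theorem~\ref{thm:general-larger-supply}, adapted to several rival units, gives the single-arrival incentive. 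Returns are then handled by either exact symbolic evaluation or the perturbation bound used in Step~4 of that theorem. The condition $\delta\le\mu$ is presumably what makes the return terms manageable.

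\textbf{Step 3: the payoff comparison.} By Lemma~\ref{lem:lending-invariance}, $\sum_{s\in S_k}z_s$ is identical in both games for each $k$, and so is $B$. Hence $\pi^{OAA}-\pi^{OA}=\lambda\sum_k\sum_{s\in S_k}(\bar p^{OAA}_s z^{OAA}_s)-(\dots)$, where $\bar p_s$ is the expected booking fee in state $s$. We will show that within each level $k$ the OAA equilibrium charges $p_H$ on a set of states carrying more discounted time than in OA. This parallels the argument from the balance equations in Step~5 of the theorem, where $z_{m-1,1}-z_{m,0}$ was computed from the balance equations. If this clean inequality fails, we would instead compute both totals exactly as rational functions and verify positivity on the box.

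\textbf{Main obstacle.} The hard step is the complete equilibrium verification, especially in OAA. Its many deviations must each be shown non-profitable over a whole parameter region rather than at a point. The narrow windows for $\alpha$ and $\lambda$ suggest that some deviation gains are nearly zero there. Our first attempt would be exact rational evaluation at the box corners, combined with monotonicity in each parameter. Failing that, we would use interval-arithmetic bounds on the numerators after clearing the positive denominators $\det(\delta\mathrm{Id}-Q)$.
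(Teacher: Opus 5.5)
\textbf{Gap 1: the level-by-level comparison in Step~3 is false.} This step is where the paper's actual argument lies. In the $(2,2)$ OA equilibrium, lender 1 quotes $p_L$ whenever available and lender 2 quotes $p_H$ only when alone, so $p_H$ is paid in $(0,1)$ and $(0,2)$. In the $(2,1,1)$ OAA equilibrium, lender 1 quotes $p_H$ at $x_1=1$ and $p_L$ at $x_1=2$, while each unit lender quotes $p_L$ at every positive rival aggregate. So $p_H$ is paid exactly when one unit is available in total. At level $k=2$, OA therefore charges $p_H$ on $(0,2)$ and OAA charges it nowhere. The gain sits at level $1$ and the loss at level $2$, so no within-level dominance exists.

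The missing idea is that the OAA fee depends only on total available supply. Lemma~\ref{lem:lending-invariance} then lets you write $\pi^{OAA}$ using OA state times, which gives
\[
\pi^{OAA}-\pi^{OA}=\lambda(p_H-p_L)(z_{1,0}-z_{0,2}),
\]
with every $z$ taken from the OA chain; OAA state times are never needed. The inequality $z_{1,0}>z_{0,2}$ then follows from two ingredients. First, the OA chain is symmetric under swapping the lenders, since supplies are equal and quotes tie whenever both are available. Second, the balance equations at $(0,2)$, $(1,1)$ and $(1,0)$ give $z_{1,1}>2z_{0,2}$, and then $z_{1,0}>\frac{2\lambda}{\delta+\lambda+3\mu}z_{0,2}$. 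This exceeds $z_{0,2}$ because $\lambda\ge4(\mu+\delta)>\delta+3\mu$. Your fallback of computing both totals as rational functions could in principle work, but as written it is only an intention. It also misses that the comparison becomes a short analytic argument once the equilibria are known.

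\textbf{Gap 2: the equilibria are never identified, and your heuristics point the wrong way.} Your OA candidate is modelled on $\sigma^{OA}$ from Theorem~\ref{thm:general-larger-supply}, with lender 2 pricing high when lender 1 is not full. In fact lender 2 quotes $p_L$ whenever lender 1 is available. Your OAA ``key feature'' has a unit lender condition on $x_1=0$ at rival aggregate $1$. That is not admissible, because at aggregate $1$ the lender cannot tell the other unit lender's unit from lender 1's last unit. In equilibrium it quotes $p_L$ there anyway, contrary to what the $\mu=0$ inequality $g_1>\alpha$ suggests.

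The returns cannot be handled by the Step~4 perturbation argument of Theorem~\ref{thm:general-larger-supply} either. Its error term is proportional to $\mu/\delta$ and must be below a constant smaller than $1$, whereas here $\mu\ge\delta$. Fast returns drive these equilibria, so only exact verification over the whole region works. This is what the paper's computer-assisted check does:
\begin{itemize}
\item each payoff difference is affine in $\alpha$, so the endpoints $\alpha\in\{0.899,0.901\}$ suffice;
\item the substitution $d=\frac{1}{1+u}$, $r=4+4d+\frac{v}{5(1+v)}$, together with the upper boundary $r=\frac{21}{5}+4d$, turns every cleared numerator into a polynomial with nonnegative coefficients and positive constant term;
\item this covers all $290$ deviation comparisons without assuming the two unit lenders play identically.
\end{itemize}
Your brute-force plan for Steps~1--2 matches this in spirit. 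However, it needs the correct target strategies and has to be carried out exactly over the full region.
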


\begin{proof}
The computer-assisted strategy verification in Appendix~\ref{app:fragmentation} gives unique equilibria $\sigma^{\mathrm{OA}}$ and $\sigma^{\mathrm{OAA}}$ with the below quotes.
\begin{center}
\renewcommand{\arraystretch}{1.15}
\begin{tabularx}{\linewidth}{@{}lXX@{}}
\toprule
Regime & Lender 1 & Informed lender(s) \\
\midrule
$(2,2)$, OA
& $p_L$ whenever available
& $p_H$ when alone; $p_L$ otherwise \\
$(2,1,1)$, OAA
& $p_H$ if $x_1=1$; $p_L$ if $x_1=2$
& Each quotes $p_H$ when alone; $p_L$ otherwise \\
\bottomrule
\end{tabularx}
\end{center}
By Lemma~\ref{lem:lending-invariance}, $B(\sigma)$ has the same value in both games; write $B$ for this common value. Define $z_{a,b}:=z_{(a,b)}(\sigma^{\mathrm{OA}})$ for $0\le a,b\le2$. These discounted state times refer only to the $(2,2)$ game. Its successful bookings pay $p_H$ in $(0,1)$ and $(0,2)$, and $p_L$ otherwise. Hence
\[
\pi_{\mathrm{OA}}(\lambda)
=
p_LB+\lambda(p_H-p_L)(z_{0,1}+z_{0,2}).
\]
After the split, a booking pays $p_H$ exactly when only one unit is available in total. If at least two units are available, either lender 1 has both its units available and quotes $p_L$, or an informed lender faces competing supply and quotes $p_L$. By Lemma~\ref{lem:lending-invariance} the discounted time with exactly one unit available is the same in both games. In the $(2,2)$ game this is $z_{1,0}+z_{0,1}$, so
\[
\pi^{\mathrm{OAA}}(\lambda) = p_LB+\lambda(p_H-p_L)(z_{1,0}+z_{0,1}).
\]
Subtracting gives
\[
\pi^{\mathrm{OAA}}(\lambda)-\pi^{\mathrm{OA}}(\lambda) =\lambda(p_H-p_L)(z_{1,0}-z_{0,2}).
\]
It remains to show that $z_{1,0}>z_{0,2}$. In the two-lender equilibrium, both lenders have supply two and receive equal allocation probabilities whenever both are available. Supply transitions are therefore symmetric between the lenders, despite their different information. Starting from $(2,2)$, this gives $z_{0,1}=z_{1,0}$, $z_{0,2}=z_{2,0}$ and $z_{1,2}=z_{2,1}$. Using these equalities, the discounted balance equations for $(0,2)$, $(1,1)$ and $(1,0)$, respectively, are
\[
\begin{aligned}
(\delta+\lambda+2\mu)z_{0,2}
&=\mu z_{1,0}+\frac{\lambda}{2}z_{1,2},\\
(\delta+\lambda+2\mu)z_{1,1}
&=4\mu z_{1,0}+\lambda z_{1,2},\\
(\delta+\lambda+3\mu)z_{1,0}
&=2\mu z_{0,0}+\lambda z_{0,2}
  +\frac{\lambda}{2}z_{1,1}.
\end{aligned}
\]
All these state times are positive because the states are reachable from $(2,2)$. Subtracting twice the first equation from the second gives $(\delta+\lambda+2\mu)(z_{1,1}-2z_{0,2}) =2\mu z_{1,0}>0$. Substitution into the third equation yields
\[
z_{1,0}>
\frac{2\lambda}{\delta+\lambda+3\mu}z_{0,2}
>z_{0,2},
\]
where the last inequality follows from $\lambda\ge4(\mu+\delta)>\delta+3\mu$. Since $\lambda>0$ and $p_H>p_L$, the payoff difference is positive.
\end{proof}
\noindent Splitting lender 2 removes the high fee it charged when it alone held two available units, since the two new lenders then compete. However, lender 1 changes its quote from $p_L$ to $p_H$ when it has one unit available, including when alone.

\section{Conclusion and Future Work}
We studied how access to aggregate rival-supply information affects lender payoffs, within observation-constrained pure stationary strategies. Informing only the larger lender can yield higher total payoff than informing only the smaller when demand is low relative to supply. With all other parameters unchanged, this ranking can reverse when demand is close to supply: informing only the smaller lender then yields the highest total payoff among all four information regimes. With two permitted fees, we established sufficient conditions for higher total payoff under OA than AO for any larger-lender supply when the smaller lender owns one unit. Reassigning information can also benefit both lenders, and splitting an informed lender can raise total payoff. At fixed total supply and arrival, return and discount rates, these payoff differences come from fees and discounted state times, not
additional lending.

\noindent Future work could allow loan end dates to be known at booking, so that strategies account for when units will return. A continuous
interval of permitted fees would allow arbitrarily small undercuts, requiring a fresh analysis of equilibrium existence and payoff rankings. Extending the comparisons to more lenders would also allow us to distinguish the effects of observing aggregate rival available supply from observing each rival's available supply separately.

\newpage
\bibliography{refs}
\bibliographystyle{plainnat}

\clearpage

\appendix
\section{Information Assignment: Proofs and Equilibrium Verification}
In this section we show all the proofs for Section \ref{sec_res_supply_info}. For each strategy profile $\sigma$, we construct $Q(\sigma)$ and $\mathbf r_i(\sigma)$, compute $\mathbf V_i(\sigma)$ from~\eqref{eq:payoff-system}, and compare $V_i(s^0,\sigma)$ across all admissible unilateral deviations. In each table, the first block reports a lender's best response to every remaining strategy of its rival. The second checks the rival's responses to every distinct best response in the first. An equilibrium must be a pair of mutual best responses appearing in these blocks. The column headings identify the state or own supply determining each quote. Both payoff columns refer to the responding lender. The next-best payoff is the highest payoff from any different remaining strategy against the same rival strategy. Comparisons use exact arithmetic, displayed payoffs are rounded to four decimals and bold entries identify equilibrium strategies and their payoffs.

\subsection{Example~\ref{prop_supply_info}: The recipient of supply information}
\subsubsection{Equilibrium verification for $\lambda=1$}
\label{app:lambda-one}
Fix supplies $(c_1,c_2)=(2,1)$, permitted fees $P=\{0.75,0.875,1\}$, $\lambda=\mu=1$, $\delta=0.1$, and initial state $s^0=(2,1)$. We establish the unique equilibria under $AO$ and $OA$.
\paragraph{Regime $AO$.}
Lender 2 chooses one quote whenever available. Lender 1 chooses
separate quotes in states $(1,1)$ and $(2,1)$. There are three lender-2 strategies and
nine remaining lender-1 strategies. The first block covers
every lender-2 quote; the second checks lender 2's responses
to every lender-1 strategy appearing in the first.

\begin{center}{%
\begin{tabular}{crrrr}
\hline
\shortstack{Lender 2's quote\\when $x_2=1$} &
\multicolumn{2}{c}{Lender 1's best response} &
Payoff &
\shortstack{Next-best\\payoff}\\
\cline{2-3}
 & $s=(1,1)$ & $s=(2,1)$ & & \\
\hline
\textbf{0.75}
 & \textbf{0.75} & \textbf{0.75}
 & \textbf{5.0688} & 4.7899\\
0.875
 & 0.75 & 0.75
 & 6.2918 & 5.9827\\
1
 & 0.875 & 0.875
 & 7.2385 & 6.8139\\
\hline
\shortstack{Lender 1's quotes\\in $(1,1)$ and $(2,1)$} &
\multicolumn{2}{c}{Lender 2's best response} &
Payoff &
\shortstack{Next-best\\payoff}\\
\cline{2-3}
 & \multicolumn{2}{c}{$x_2=1$} & & \\
\hline
$\boldsymbol{(0.75,0.75)}$
 & \multicolumn{2}{c}{\textbf{0.75}}
 & \textbf{2.7081} & 1.2711\\
$(0.875,0.875)$
 & \multicolumn{2}{c}{0.75}
 & 3.9286 & 3.1594\\
\hline
\end{tabular}}
\end{center}
Every displayed best response is unique. Lender 2 responds to either possible equilibrium strategy of lender 1 with 0.75. Lender 1 then responds with 0.75 in both competitive states. This is the only pair of mutual best responses: lender 1 quotes 1 when alone and 0.75 when competing, while lender 2 always quotes 0.75.

\paragraph{Regime $OA$.}
Lender 1 chooses separate quotes at own available supply 1 and 2. Lender 2 chooses separate quotes in states $(1,1)$ and $(2,1)$, with its quote when alone fixed at 1. Each lender has nine remaining strategies. The first block covers every lender-1 strategy; the second checks lender 1's responses to every lender-2 strategy appearing in the first.

\begin{center}
{%
\begin{tabular}{crrrr}
\hline
\shortstack{Lender 1's quotes\\at $x_1=1$ and $x_1=2$} &
\multicolumn{2}{c}{Lender 2's best response} &
Payoff &
\shortstack{Next-best\\payoff}\\
\cline{2-3}
 & $s=(1,1)$ & $s=(2,1)$ & & \\
\hline
$\boldsymbol{(0.75,0.75)}$
 & \textbf{0.75} & \textbf{0.75}
 & \textbf{2.8340} & 2.4583\\
$(0.75,0.875)$
 & 0.75 & 0.75
 & 3.8226 & 3.6455\\
$(0.75,1)$
 & 0.75 & 0.875
 & 4.3446 & 4.1676\\
$(0.875,0.75)$
 & 0.75 & 0.75
 & 3.1520 & 2.9614\\
$(0.875,0.875)$
 & 0.75 & 0.75
 & 3.9724 & 3.8826\\
$(0.875,1)$
 & 0.75 & 0.875
 & 4.4945 & 4.4046\\
$(1,0.75)$
 & 0.875 & 0.75
 & 3.3871 & 3.1520\\
$(1,0.875)$
 & 0.875 & 0.75
 & 4.0832 & 3.9724\\
$(1,1)$
 & 0.875 & 0.875
 & 4.6053 & 4.4945\\
\hline
\shortstack{Lender 2's quotes\\in $(1,1)$ and $(2,1)$} &
\multicolumn{2}{c}{Lender 1's best response} &
Payoff &
\shortstack{Next-best\\payoff}\\
\cline{2-3}
 & $x_1=1$ & $x_1=2$ & & \\
\hline
$\boldsymbol{(0.75,0.75)}$
 & \textbf{0.75} & \textbf{0.75}
 & \textbf{4.3841} & 4.3616\\
$(0.75,0.875)$
 & 0.75 & 0.75
 & 5.4161 & 5.1819\\
$(0.875,0.75)$
 & 0.75 & 0.75
 & 4.8569 & 4.6592\\
$(0.875,0.875)$
 & 0.75 & 0.75
 & 6.1388 & 5.7367\\
\hline
\end{tabular}}
\end{center}
Again, every displayed best response is unique. Lender 1 responds to all four possible equilibrium strategies of lender 2 with 0.75 at both own-supply levels. Lender 2 then responds with 0.75 in both competitive states. Hence the unique $OA$ equilibrium has lender 1 always quoting 0.75 and lender 2 quoting 1 when alone and 0.75 when competing.

\noindent Under $OO$, both lenders always quoting 0.75 is also an equilibrium. Under $AA$, both quoting 1 when alone and 0.75 when competing is an equilibrium. These follow from the own-supply and informed best responses above, respectively: changing a rival's quote when it is alone affects neither the lender's reward nor the supply transitions under any deviation.

\subsubsection{Equilibrium verification for $\lambda=3$}
\label{app:lambda-three}

Fix supplies $(c_1,c_2)=(2,1)$, permitted fees $P=\{0.75,0.875,1\}$, $\lambda=3$, $\mu=1$, $\delta=0.1$, and initial state $s^0=(2,1)$. We establish unique equilibria under $OO$, $AO$ and $OA$, and identify both equilibria under $AA$. The probability $q$ used in Appendix~\ref{app:example-payoffs}
describes the equilibrium allocations only. Deviations may produce different allocation probabilities in states $(1,1)$ and $(2,1)$.

\paragraph{Regime $OO$.}
Lender 1 chooses a quote at each own-supply level, $x_1=1$ and $x_1=2$. Lender 2 chooses one quote whenever $x_2=1$. There are nine lender-1 strategies and three lender-2 strategies.

\begin{center}{%
\begin{tabular}{crrrr}
\hline
\shortstack{Lender 2's quote\\when $x_2=1$} &
\multicolumn{2}{c}{Lender 1's best response} &
Payoff & \shortstack{Next-best\\payoff}\\
\cline{2-3}
 & $x_1=1$ & $x_1=2$ & & \\
\hline
0.75 & 1 & 1 & 12.3825 & 11.8361\\
\textbf{0.875} & \textbf{1} & \textbf{1} & \textbf{12.3825} & 12.2239\\
1 & 1 & 0.875 & 13.4266 & 13.4149\\
\hline
\shortstack{Lender 1 quotes\\$x_1=1,2$} &
\multicolumn{2}{c}{Lender 2's best response} &
Payoff & \shortstack{Next-best\\payoff}\\
\cline{2-3}
 & \multicolumn{2}{c}{$x_2=1$} & & \\
\hline
$(1,0.875)$ & \multicolumn{2}{c}{0.875} & 6.6549 & 6.3986\\
$\boldsymbol{(1,1)}$ & \multicolumn{2}{c}{\textbf{0.875}} & \textbf{7.0427} & 7.0164\\
\hline
\end{tabular}}
\end{center}
Every displayed best response is unique. Lender 2's response to either possible equilibrium strategy of lender 1 is 0.875. Lender 1 then responds with 1 at both own-supply levels. This gives the unique $OO$ equilibrium.

\paragraph{Regime $AO$.}
Lender 1 chooses competitive quotes in states $(1,1)$ and $(2,1)$, with its quotes when alone fixed at 1. Lender 2 chooses one quote whenever $x_2=1$. There are nine remaining lender-1 strategies and three lender-2 strategies.

\begin{center}
{%
\begin{tabular}{crrrr}
\hline
\shortstack{Lender 2's quote\\when $x_2=1$} &
\multicolumn{2}{c}{Lender 1's best response} &
Payoff & \shortstack{Next-best\\payoff}\\
\cline{2-3}
 & $s=(1,1)$ & $s=(2,1)$ & & \\
\hline
0.75 & 0.75 & 0.75 & 12.4529 & 12.4127\\
\textbf{0.875} & \textbf{0.875} & \textbf{0.875} & \textbf{12.9339} & 12.8510\\
1 & 0.875 & 0.875 & 13.7402 & 13.6198\\
\hline
\shortstack{Lender 1 quotes\\$(1,1),\ (2,1)$} &
\multicolumn{2}{c}{Lender 2's best response} &
Payoff & \shortstack{Next-best\\payoff}\\
\cline{2-3}
 & \multicolumn{2}{c}{$x_2=1$} & & \\
\hline
$(0.75,0.75)$ & \multicolumn{2}{c}{1} & 5.5067 & 5.2623\\
$\boldsymbol{(0.875,0.875)}$ & \multicolumn{2}{c}{\textbf{0.875}} & \textbf{6.1393} & 6.0366\\
\hline
\end{tabular}}
\end{center}
Lender 1's only possible equilibrium competitive quotes are $(0.75,0.75)$ and $(0.875,0.875)$. The former cannot be sustained: lender 2 responds with 1, to which lender 1's response is $(0.875,0.875)$. The unique pair of mutual best responses has lender 1 quote 0.875 in both competitive states and lender 2 quote 0.875 whenever available.

\paragraph{Regime $OA$.}
Lender 1 chooses quotes at own available supply $x_1=1$ and $x_1=2$. Lender 2 chooses competitive quotes in states $(1,1)$ and $(2,1)$, with its quote when alone fixed at 1. Each lender has nine remaining strategies.

\begin{center}
\adjustbox{max width=\linewidth}{%
\begin{tabular}{crrrr}
\hline
\shortstack{Lender 1 quotes\\$x_1=1,2$} &
\multicolumn{2}{c}{Lender 2's best response} &
Payoff & \shortstack{Next-best\\payoff}\\
\cline{2-3}
 & $s=(1,1)$ & $s=(2,1)$ & & \\
\hline
$(0.75,0.75)$ & 0.75 & 0.75 & 6.0543 & 5.9154\\
$(0.75,0.875)$ & 0.75 & 0.75 & 6.4219 & 6.3242\\
$(0.75,1)$ & 0.75 & 0.875 & 6.8349 & 6.7371\\
$(0.875,0.75)$ & 0.875 & 0.75 & 6.3289 & 6.1733\\
$(0.875,0.875)$ & 0.875 & 0.75 & 6.6151 & 6.5353\\
$(0.875,1)$ & 0.875 & 0.875 & 7.0280 & 6.9186\\
$(1,0.75)$ & 0.875 & 0.75 & 6.6830 & 6.6034\\
$(1,0.875)$ & 0.875 & 0.875 & 6.8895 & 6.8643\\
$\boldsymbol{(1,1)}$ & \textbf{0.875} & \textbf{0.875} & \textbf{7.2772} & 7.2212\\
\hline
\shortstack{Lender 2 quotes\\$(1,1),\ (2,1)$} &
\multicolumn{2}{c}{Lender 1's best response} &
Payoff & \shortstack{Next-best\\payoff}\\
\cline{2-3}
 & $x_1=1$ & $x_1=2$ & & \\
\hline
$(0.75,0.75)$ & 1 & 1 & 12.3825 & 11.8361\\
$(0.75,0.875)$ & 1 & 1 & 12.3825 & 12.2239\\
$(0.875,0.75)$ & 1 & 1 & 12.3825 & 11.8361\\
$\boldsymbol{(0.875,0.875)}$ & \textbf{1} & \textbf{1} & \textbf{12.3825} & 12.2239\\
\hline
\end{tabular}}
\end{center}
Lender 1 responds to every possible equilibrium strategy of lender 2 with 1 at both own-supply levels. Lender 2 then responds with 0.875 in both competitive states. Together with its quote of 1 when alone, this gives the unique $OA$ equilibrium.

\paragraph{Regime $AA$.}
Both lenders quote 1 when alone. Each chooses a pair of competitive quotes for states $(1,1)$ and $(2,1)$, giving nine remaining strategies per lender.

\begin{center}{%
\begin{tabular}{crrrr}
\hline
\shortstack{Lender 1 quotes\\$(1,1),\ (2,1)$} &
\multicolumn{2}{c}{Lender 2's best response} &
Payoff & \shortstack{Next-best\\payoff}\\
\cline{2-3}
 & $s=(1,1)$ & $s=(2,1)$ & & \\
\hline
$\boldsymbol{(0.75,0.75)}$ & \textbf{0.75} & \textbf{0.75} & \textbf{6.0543} & 5.9154\\
$(0.75,0.875)$ & 0.75 & 0.75 & 6.4219 & 6.3242\\
$(0.75,1)$ & 0.75 & 0.875 & 6.8349 & 6.7371\\
$\boldsymbol{(0.875,0.75)}$ & \textbf{0.875} & \textbf{0.75} & \textbf{6.3289} & 6.1733\\
$(0.875,0.875)$ & 0.875 & 0.75 & 6.6151 & 6.5353\\
$(0.875,1)$ & 0.875 & 0.875 & 7.0280 & 6.9186\\
$(1,0.75)$ & 0.875 & 0.75 & 6.6830 & 6.6034\\
$(1,0.875)$ & 0.875 & 0.875 & 6.8895 & 6.8643\\
$(1,1)$ & 0.875 & 0.875 & 7.2772 & 7.2212\\
\hline
\shortstack{Lender 2 quotes\\$(1,1),\ (2,1)$} &
\multicolumn{2}{c}{Lender 1's best response} &
Payoff & \shortstack{Next-best\\payoff}\\
\cline{2-3}
 & $s=(1,1)$ & $s=(2,1)$ & & \\
\hline
$\boldsymbol{(0.75,0.75)}$ & \textbf{0.75} & \textbf{0.75} & \textbf{12.4529} & 12.4127\\
$(0.75,0.875)$ & 0.75 & 0.875 & 12.6594 & 12.6192\\
$\boldsymbol{(0.875,0.75)}$ & \textbf{0.875} & \textbf{0.75} & \textbf{12.7275} & 12.6040\\
$(0.875,0.875)$ & 0.875 & 0.875 & 12.9339 & 12.8510\\
\hline
\end{tabular}}
\end{center}
The first block restricts lender 2's equilibrium competitive quotes to the four pairs checked in the second. Lender 1's unique response to each is to match that pair. Returning to the first block, matching is also lender 2's best response only for $(0.75,0.75)$ and $(0.875,0.75)$. There are therefore exactly two $AA$ equilibria. In the first, both lenders quote 0.75 in both competitive states. In the second, both quote 0.875 in $(1,1)$ and 0.75 in $(2,1)$. Both equilibria have quotes of 1 when alone. Their total payoffs are 18.5072 and 19.0564, respectively, so $\pi^{\mathrm{AA}}(3)=19.0564$, rounded to four decimal places.

\subsubsection{Payoff comparisons}
\label{app:example-payoffs}

\begin{proof}[Proof of Example~\ref{prop_supply_info}]
We begin with the proof for the lower arrival rate. Choose supplies
$(c_1,c_2)=(2,1)$, permitted fees $P=\{0.75,0.875,1\}$, return rate
$\mu=1$, discount rate $\delta=0.1$, initial state $s^0=(2,1)$ and
$\lambda_L=1$.

\noindent Appendix~\ref{app:lambda-one} establishes the unique
equilibria under $AO$ and $OA$. In each regime, both lenders quote
$\frac{3}{4}$ when competing. When alone, the informed lender
quotes 1 and the uninformed lender quotes $\frac{3}{4}$. When both
lenders have available supply, their quotes tie, so each receives
the borrower with probability $\frac{1}{2}$ under either regime.
When only one lender has supply, it receives the borrower with
probability 1, since all permitted fees are accepted. No booking
occurs when neither has supply. Under these equilibrium strategies,
each lender's probability of receiving a borrower in a given supply
state is the same under $AO$ and $OA$. The regimes differ only in
the fee charged when a lender is alone. Return rates are also
unchanged, giving the transition-rate matrix
\[
Q=
\begin{array}{c|rrrrrr}
 & (0,0)&(0,1)&(1,0)&(1,1)&(2,0)&(2,1)\\
\hline
(0,0)&-3&1&2&0&0&0\\
(0,1)&1&-3&0&2&0&0\\
(1,0)&1&0&-3&1&1&0\\
(1,1)&0&\frac{1}{2}&\frac{1}{2}&-2&0&1\\
(2,0)&0&0&1&0&-2&1\\
(2,1)&0&0&0&\frac{1}{2}&\frac{1}{2}&-1
\end{array},
\]
with rows indexing the current state and columns the destination.

\noindent Let $\sigma^{AO}$ and $\sigma^{OA}$ denote the two
equilibrium strategy profiles. Since they induce the same
transition-rate matrix and start from the same state, their
discounted times in every state are the same. Using the previously
defined $z_s(\sigma)$, write
\[
z_{ab}
:=
z_{(a,b)}(\sigma^{AO})
=
z_{(a,b)}(\sigma^{OA}),
\qquad (a,b)\in S.
\]
Applying~\eqref{eq:state-time-system} to the displayed matrix $Q$,
the rows corresponding to $(1,0)$ and $(0,1)$ give
\[
(\delta+3)z_{10}
=2z_{00}+\frac{1}{2}z_{11}+z_{20}
\text{ and }
(\delta+3)z_{01}
=z_{00}+\frac{1}{2}z_{11}.
\]
Subtracting yields
\[
z_{10}-z_{01}
=
\frac{z_{00}+z_{20}}{\delta+3}>0,
\]
since both $(0,0)$ and $(2,0)$ are reachable from $(2,1)$ and have
positive expected holding times. Relative to $OA$, the fee
received in regime $AO$ is $\frac{1}{4}$ more per booking in states
$(1,0)$ and $(2,0)$, and $\frac{1}{4}$ less in $(0,1)$. Fees
coincide when both lenders are available. With $\lambda=1$,
\[
\pi^{AO}(1)-\pi^{OA}(1)
=
\frac{1}{4}\bigl(z_{10}+z_{20}-z_{01}\bigr)>0.
\]

We next give the proof for the higher arrival rate.
Set $\lambda_H=3$, retaining supplies $(c_1,c_2)=(2,1)$, permitted
fees $P=\{0.75,0.875,1\}$, return rate $\mu=1$ and discount rate
$\delta=0.1$. Appendix~\ref{app:lambda-three} establishes unique
equilibria under $OO$, $AO$ and $OA$, with fees
\begin{center}
\small
\setlength{\tabcolsep}{4pt}
\renewcommand{\arraystretch}{1.05}
\begin{tabularx}{\linewidth}{@{}cXX@{}}
\toprule
Regime & Lender 1 & Lender 2 \\
\midrule
$\mathrm{OO}$
    & $1$ whenever available
    & $0.875$ whenever available \\

$\mathrm{AO}$
    & $1$ when alone; $0.875$ otherwise
    & $0.875$ whenever available \\

$\mathrm{OA}$
    & $1$ whenever available
    & $1$ when alone; $0.875$ otherwise \\
\bottomrule
\end{tabularx}
\end{center}
Under $AA$, there are two equilibria. Both lenders quote 1 when
alone. When competing, they both quote 0.75 in the first
equilibrium; in the second, they both quote 0.875 in state $(1,1)$
and 0.75 in $(2,1)$. Under $OO$ and $OA$, lender 2 quotes less than
lender 1 whenever both are available, and therefore receives the
borrower with probability 1. Under $AO$ and both $AA$ equilibria,
their competitive fees tie, so each lender receives the borrower
with probability $\frac{1}{2}$.

\noindent Let $q$ denote lender 1's allocation probability when both
lenders are available. The equilibrium transition matrices are
therefore
\[
Q_q=
\begin{array}{c|rrrrrr}
 & (0,0)&(0,1)&(1,0)&(1,1)&(2,0)&(2,1)\\
\hline
(0,0)&-3&1&2&0&0&0\\
(0,1)&3&-5&0&2&0&0\\
(1,0)&3&0&-5&1&1&0\\
(1,1)&0&3q&3(1-q)&-4&0&1\\
(2,0)&0&0&3&0&-4&1\\
(2,1)&0&0&0&3q&3(1-q)&-3
\end{array},
\]
where $q=0$ under $OO$ and $OA$, and $q=\frac{1}{2}$ under $AO$
and both $AA$ equilibria. Rows index the current state and columns
the destination. For $R\in\{AO,OA\}$, let $\sigma^R$ denote the
equilibrium under regime $R$, and write
$z_{ab}^{R}:=z_{(a,b)}(\sigma^R)$. Thus $z_{ab}^{R}$ is the expected
discounted time spent in state $(a,b)$ under that equilibrium.
Because their transition matrices coincide, $OO$ has the same
discounted state times as $OA$, and both $AA$ equilibria have the
same discounted state times as $AO$.
Lemma~\ref{lem:lending-invariance} gives equal discounted time at
each total available supply. Total available supply zero corresponds
only to $(0,0)$, total available supply three only to $(2,1)$, and
total available supply one to $(0,1)$ or $(1,0)$. Hence
\[
z_{00}^{AO}=z_{00}^{OA},\
z_{21}^{AO}=z_{21}^{OA},
\text{ and }
z_{01}^{AO}+z_{10}^{AO}=z_{01}^{OA}+z_{10}^{OA}.
\]
Applying~\eqref{eq:state-time-system} to $Q_{\frac{1}{2}}$ under
$AO$ and $Q_0$ under $OA$, the row corresponding to $(0,1)$ gives
\[
(\delta+5)z_{01}^{AO}
=z_{00}^{AO}+\frac{3}{2}z_{11}^{AO}
\text{ and }
(\delta+5)z_{01}^{OA}=z_{00}^{OA}.
\]
Subtracting and using $z_{00}^{AO}=z_{00}^{OA}$ gives
$(\delta+5)(z_{01}^{AO}-z_{01}^{OA})
=\frac{3}{2}z_{11}^{AO}>0$.

\noindent The row corresponding to $(1,1)$ gives
\[
(\delta+4)z_{11}^{AO}
=2z_{01}^{AO}+z_{10}^{AO}+\frac{3}{2}z_{21}^{AO}
\text{ and }
(\delta+4)z_{11}^{OA}=2z_{01}^{OA}+z_{10}^{OA}.
\]
With $z_{10}^{AO}-z_{10}^{OA}=-(z_{01}^{AO}-z_{01}^{OA})$,
subtracting yields
\[
(\delta+4)(z_{11}^{AO}-z_{11}^{OA})
=
z_{01}^{AO}-z_{01}^{OA}+\frac{3}{2}z_{21}^{AO}>0.
\]
The state times are positive because the states are reachable.
Thus $OA$ spends less discounted time in $(1,1)$ than $AO$, while
time in $(2,1)$ is unchanged. By
Lemma~\ref{lem:lending-invariance}, $B(\sigma)$ has the same value
under all four information regimes; write $B$ for this common value.
Under $OA$, each booking pays 1 unless both lenders are available,
in which case it pays 0.875. With arrival rate $\lambda=3$, the
reduction from a fee of 1 contributes $\frac{3}{8}$ per unit of
discounted time in either competitive state. Therefore,
\[
\pi^{OA}(3)
=
B-\frac{3}{8}\bigl(z_{11}^{OA}+z_{21}^{OA}\bigr).
\]
Under $OO$, allocations are unchanged, but lender 2 also charges
0.875 when alone. Hence
\[
\pi^{OA}(3)-\pi^{OO}(3)
=
\frac{3}{8}z_{01}^{OA}>0.
\]
Under $AO$, the fee is 0.875 when lenders compete and when only
lender 2 is available, so
\[
\pi^{AO}(3)
=
B-\frac{3}{8}\bigl(z_{01}^{AO}+z_{11}^{AO}+z_{21}^{AO}\bigr).
\]
Using $z_{21}^{AO}=z_{21}^{OA}$ gives
\[
\pi^{OA}(3)-\pi^{AO}(3)
=
\frac{3}{8}\bigl(z_{01}^{AO}+z_{11}^{AO}-z_{11}^{OA}\bigr)>0.
\]
Finally, both $AA$ equilibria have the discounted state times of
$AO$, charge 1 when alone, and at most 0.875 when competing. Their
payoffs therefore satisfy
\[
\pi^{AA}(3)
\leq
B-\frac{3}{8}\bigl(z_{11}^{AO}+z_{21}^{AO}\bigr).
\]
Consequently,
\[
\pi^{OA}(3)-\pi^{AA}(3)
\geq
\frac{3}{8}\bigl(z_{11}^{AO}-z_{11}^{OA}\bigr)>0.
\]
Together, these inequalities establish
\[
\pi^{OA}(3)>
\max\{\pi^{OO}(3),\pi^{AO}(3),\pi^{AA}(3)\}.\qedhere
\]
\end{proof}

\subsection{Rate bounds and equilibrium verification for Theorem~\ref{thm:general-larger-supply}} \label{app:theorem1}
Throughout this appendix, divide fees and payoffs by $p_H$, so the permitted fees are $\{\alpha,1\}$. Retain $r_i$ and $V_i$ for the resulting revenue and payoff vectors, and write $g_j=g_j(\beta)$.

\medskip
\noindent\textbf{1. Pricing incentives when units do not return.}
Set $\mu=0$. Iterating
$g_j=\frac{1}{2}+\frac{\beta}{2}g_{j-1}$ from $g_0=1$ gives
\[
g_j=
\frac{1+(1-\beta)\left(\frac{\beta}{2}\right)^j}{2-\beta},
\qquad j=0,\ldots,m.
\]
Consequently,
\[
g_{j-1}-g_j
=
\frac{1-\beta}{2}
\left(\frac{\beta}{2}\right)^{j-1}>0,
\qquad j=1,\ldots,m,
\]
which proves the decrease used in the main proof. To verify lender 1's response to lender 2's constant low quote in Step~2, we will need $\alpha-\beta>0$. This follows from
our choice $\alpha>g_m$, since
\[
g_m-\beta = \frac{1+(1-\beta)\left(\frac{\beta}{2}\right)^m -\beta(2-\beta)} {2-\beta} = \frac{(1-\beta) \left[1-\beta+\left(\frac{\beta}{2}\right)^m\right]} {2-\beta}>0.
\]
Hence $\alpha>g_m>\beta$.

\paragraph{Lender 1's incentive to match the high fee.}
Define
\[
F_0=0,
\qquad
F_x=\sum_{k=1}^{x}\beta^k,
\qquad x=1,\ldots,m+1.
\]
Here $F_x$ is the expected discounted sum of fees when each of the next $x$ borrowers pays $1$. The first payment carries the factor $\beta$ because that borrower has not yet arrived. The same initial wait must be included when using $g_j$ in a state payoff. The definition of $g_j$ starts with a borrower already present. Starting instead from state $(j,1)$ before the
next arrival, lender 2 must first wait for that borrower. Available supplies remain unchanged during this wait because $\mu=0$. Thus, when both lenders continue quoting $1$, lender 2's payoff from that state is $\beta g_j$. The extra factor discounts only the initial wait; $g_j$ already discounts all subsequent waits if lender 2 does not receive the first borrower. From state $(j,1)$, the next $j+1$ borrowers each pay $1$, giving total lender payoff $F_{j+1}$. Lender 1's payoff is therefore $F_{j+1}-\beta g_j$.
Now consider lender 1's quote at a borrower arrival in $(j,1)$, with $1\le j\le m$. Fix lender 2's current quote at $1$ and both lenders' subsequent quotes at $1$. If lender 1 receives the borrower, its continuation payoff from $(j-1,1)$ is $F_j-\beta g_{j-1}$. If lender 2 receives the borrower, lender 1 retains $j$ available units without competing supply, giving continuation payoff $F_j$. Both continuation payoffs include waiting for the next borrower after the current booking. Lender 1's payoff gain from matching $1$ rather than undercutting at $\alpha$ is therefore
\[
\frac{1}{2}\bigl(1+F_j-\beta g_{j-1}\bigr) +\frac{1}{2}F_j - \bigl(\alpha+F_j-\beta g_{j-1}\bigr) =\frac{1+\beta g_{j-1}}{2}-\alpha =g_j-\alpha.
\]
Thus the same expression gives either lender's payoff gain from
matching, under the specified subsequent quotes.

\paragraph{Verifying complete strategy changes.}
To use these quote incentives in the equilibrium verification,
we need to establish optimality against changes to complete
admissible strategies. The following identity applies at any
fixed $\mu\ge0$.

Fix both lenders' strategies $\sigma$, and let
$\widetilde{\sigma}$ differ only in lender $i$'s strategy.
Subtracting~\eqref{eq:payoff-system} for $\sigma$ and $\widetilde{\sigma}$ gives
\[
\begin{aligned}
\bigl[\delta\operatorname{Id}-Q(\widetilde{\sigma})\bigr]
\bigl[V_i(\sigma)-V_i(\widetilde{\sigma})\bigr]
={}&r_i(\sigma)-r_i(\widetilde{\sigma})\\
&+\bigl[Q(\sigma)-Q(\widetilde{\sigma})\bigr]V_i(\sigma).
\end{aligned}
\]
In each state, the right-hand side divided by the borrower
arrival rate $\lambda$ is lender $i$'s payoff gain from its
quote under $\sigma$ rather than its quote under
$\widetilde{\sigma}$ at the current arrival, with both lenders
following $\sigma$ thereafter. The return transitions cancel
because changing quotes does not change them.

The inverse matrix on the left satisfies
\[
\bigl[\delta\operatorname{Id}-Q(\widetilde{\sigma})\bigr]^{-1}
=
\int_0^\infty
e^{-\delta t}e^{tQ(\widetilde{\sigma})}\,dt.
\]
It has nonnegative entries and row sums $\frac{1}{\delta}$.
Hence, if the right-hand side is nonnegative in every state,
then
\[
V_i(s^0,\sigma)\ge V_i(s^0,\widetilde{\sigma}).
\]
Checking that the proposed quote is at least as profitable as
every alternative quote in each state, with both lenders
following $\sigma$ thereafter, is therefore sufficient to exclude
profitable changes to any or all of lender $i$'s admissible quotes.

\medskip
\noindent\textbf{2. Best responses under AO.}
Keep $\mu=0$. Lender 2's two admissible strategies are to quote
$\alpha$ whenever available or to quote $1$ whenever available.
For either choice, lender 1 optimally quotes $1$ in states
$(x,0)$, with $1\le x\le m$: raising its quote from $\alpha$
to $1$ increases its booking fee without changing any supply
transition. We verify its remaining quotes against each
lender-2 strategy.

\paragraph{Lender 1 against lender 2's constant quote $\alpha$.}
Consider lender 1's strategy that quotes $\alpha$ when both
lenders have available units and $1$ when alone. Let $U_x$
denote lender 1's payoff from state $(x,1)$ under these
strategies. At the next arrival, either lender 1 receives
$\alpha$ and continues from $(x-1,1)$, or lender 2 receives
the borrower and leaves lender 1 with $x$ available units
without competition. Thus
\[
U_0=0,
\qquad
U_x=\frac{\beta}{2}
\bigl(\alpha+U_{x-1}+F_x\bigr),
\qquad x=1,\ldots,m.
\]

At a borrower arrival in $(x,1)$, quoting $1$ instead of
matching $\alpha$ gives the borrower to lender 2 and leaves
lender 1 with continuation payoff $F_x$. Using
$F_x=\beta+\beta F_{x-1}$, induction in the equation for $U_x$
gives lender 1's payoff gain from matching, with its proposed
strategy followed thereafter:
\[
\frac{1}{2}\bigl(\alpha+U_{x-1}+F_x\bigr)-F_x = \frac{U_x-\beta F_x}{\beta} = \frac{\alpha-\beta}{\beta} \sum_{k=1}^{x}\left(\frac{\beta}{2}\right)^k \ge\frac{\alpha-\beta}{2}>0.
\]
The final inequality uses $\alpha>\beta$, established in
Step~1. Matching $\alpha$ therefore increases lender 1's
payoff at every state where both lenders have available units.

\paragraph{Lender 1 against lender 2's constant quote $1$.}
For $1\le x<m$, Step~1 gives lender 1 a payoff gain
$g_x-\alpha>0$ from matching $1$ rather than undercutting,
with both lenders quoting $1$ subsequently. These subsequent
quotes are optimal by induction on $x$: after the current
booking, lender 1 either has fewer available units or faces
no further competition.

At $(m,1)$, either allocation leads to a state where lender
1's verified quote is $1$: $(m-1,1)$ if lender 1 receives
the borrower, or $(m,0)$ if lender 2 receives it. Lender 1's
gain from undercutting rather than matching at $(m,1)$ is
therefore
\[
\alpha-g_m>0.
\]
Its best response consequently quotes $1$ when competing
with $x_1<m$, $\alpha$ when competing with $x_1=m$, and
$1$ when alone. The complete-strategy identity in Step~1
verifies both lender-1 responses against all admissible
strategy changes.

\paragraph{Lender 2's choice against each lender-1 response.}
Hold each of the two verified lender-1 strategies fixed.
Lender 2's payoffs from $s^0=(m,1)$ under its two constant
quotes are
\begin{center}
\renewcommand{\arraystretch}{1.3}
\begin{tabularx}{\linewidth}{@{}Xcc@{}}
\toprule
Lender 1's quotes when both lenders are available
& \shortstack{Lender 2 always\\quotes $\alpha$}
& \shortstack{Lender 2 always\\quotes $1$} \\
\midrule
$\alpha$ for $1\le x_1\le m$
& $\alpha\beta g_m$
& $\beta^{m+1}$ \\
$1$ for $1\le x_1<m$; $\alpha$ for $x_1=m$
& $\frac{\alpha\beta(1+\beta)}{2}$
& $\beta^2g_{m-1}$ \\
\bottomrule
\end{tabularx}
\end{center}

In the first row, matching $\alpha$ gives lender 2 the same
probabilities of receiving successive borrowers as in the
calculation defining $g_m$, but with fee $\alpha$ and an
initial wait for the first borrower. Quoting $1$ instead
postpones its booking until lender 1 has received $m$
borrowers, so lender 2 receives the $(m+1)$-st borrower.

In the second row, quoting $\alpha$ gives lender 2 either
the first borrower or, if lender 1 receives that borrower,
the second borrower; each possibility has probability
$\frac{1}{2}$. Quoting $1$ instead gives lender 1 the first
borrower and leaves lender 2 with continuation payoff
$\beta g_{m-1}$ from $(m-1,1)$. Discounting the wait for
the first borrower gives $\beta(\beta g_{m-1})$.

To establish that both rows favour $\alpha$, use the bounds
from Step~1. Since $m\ge2$ and $0<\beta<1$,
\[
g_m>\beta\ge\beta^{m-1},
\qquad
g_m\ge\frac{1}{2},
\qquad
g_{m-1}\le g_1=\frac{1+\beta}{2}.
\]
Lender 2's payoff gains from choosing $\alpha$ rather than
$1$ therefore satisfy
\[
\begin{aligned}
\alpha\beta g_m-\beta^{m+1}
&\ge \beta(\alpha-\beta)g_m
 \ge \frac{\beta}{2}(\alpha-\beta)>0,\\
\frac{\alpha\beta(1+\beta)}{2}-\beta^2g_{m-1}
&\ge \frac{\beta(1+\beta)}{2}(\alpha-\beta)>0.
\end{aligned}
\]
Thus lender 2 strictly prefers its constant quote $\alpha$
against either verified lender-1 response.

\medskip
\noindent\textbf{3. Best responses under OA.}
Keep $\mu=0$.

\paragraph{Lender 2's best response.}
Fix lender 1's proposed strategy: quote $\alpha$ when $x_1=m$
and $1$ when $1\le x_1<m$. For $1\le x_1<m$, lender 2's
payoff gain at a borrower arrival from matching $1$ rather
than undercutting is
\[
g_{x_1}-\alpha>0.
\]
These inequalities verify lender 2's high quotes at every
smaller positive rival available supply. At $(0,1)$, quoting
$1$ rather than $\alpha$ gives a payoff gain of $1-\alpha>0$.

At $(m,1)$, lender 1 quotes $\alpha$. If lender 2 also quotes
$\alpha$, it either receives that fee immediately or continues
from $(m-1,1)$ after lender 1 receives the borrower. In the
latter case, its continuation payoff is $\beta g_{m-1}$.
Quoting $1$ instead gives the current borrower to lender 1
with certainty. Lender 2's gain from matching $\alpha$ is thus
\[\frac{1}{2}\bigl(\alpha+\beta g_{m-1}\bigr) -\beta g_{m-1} = \frac{1}{2}\bigl(\alpha-\beta g_{m-1}\bigr) = \frac{1}{2} \bigl[(\alpha-g_m)+(1-g_m)\bigr]>0,
\]
where $\beta g_{m-1}=2g_m-1$ follows from the defining equation for $g_m$.

\paragraph{Lender 1's best response.}
Now hold lender 2's proposed strategy fixed: quote $\alpha$
in $(m,1)$ and $1$ in $(x,1)$ for $0\le x<m$.
Lender 1 must use the same quote in $(m,0)$ and $(m,1)$.
Fix that quote to either $\alpha$ or $1$.

Starting with $x_1<m$, available supply cannot return to $m$,
so continuation payoffs do not depend on this fixed quote.
At every state $(x,1)$ with $1\le x<m$, lender 1's gain from
matching $1$ rather than undercutting, with both lenders
subsequently quoting $1$, is $g_x-\alpha>0$, as derived
previously. At $(x,0)$, quoting $1$ rather than $\alpha$ gives a
gain of $1-\alpha>0$. Holding the quote at $x_1=m$ fixed,
the complete-strategy identity therefore excludes every
profitable change to lender 1's quotes with $x_1<m$.
Thus, under either fixed choice, lender 1 optimally quotes
$1$ at every smaller positive own available supply.

It remains to choose between the two resulting complete
strategies. If lender 1 quotes $1$ at $x_1=m$, lender 2
receives the first borrower at $\alpha$. Lender 1 then
continues alone from $(m,0)$, quoting $1$ throughout.
Its payoff from $s^0=(m,1)$ is therefore $\beta F_m$. If lender 1 instead quotes $\alpha$ at $x_1=m$, its payoff from $(m,0)$ is
\[
\beta\bigl(\alpha+F_{m-1}\bigr).
\]
The first booking there pays $\alpha$, because lender 1
cannot distinguish $(m,0)$ from $(m,1)$; its remaining
$m-1$ bookings pay $1$.

Starting from $(m,1)$, both lenders now quote $\alpha$.
If lender 1 receives the first borrower, it earns $\alpha$
and continues from $(m-1,1)$ with payoff
$F_m-\beta g_{m-1}$. If lender 2 receives that borrower,
lender 1 continues from $(m,0)$ with the payoff just
calculated. Including the wait for the first borrower,
lender 1's payoff is
\[
\frac{\beta}{2}
\bigl[
\alpha+F_m-\beta g_{m-1}
+\beta(\alpha+F_{m-1})
\bigr].
\]
Subtracting the payoff $\beta F_m$ from choosing $1$ at $x_1=m$, and using $F_m=\beta+\beta F_{m-1}$, gives
\[
\frac{\beta}{2}
\bigl[(1+\beta)\alpha-\beta-\beta g_{m-1}\bigr]=
\frac{\beta}{2}
\bigl[
(1+\beta)(\alpha-g_m)
+(1-\beta)(1-g_m)
\bigr]>0,
\]
where the equality uses $\beta g_{m-1}=2g_m-1$.
The inequality follows from $\alpha>g_m$ and $g_m<1$.
Lender 1 therefore prefers $\alpha$ at $x_1=m$, together
with $1$ at every smaller positive own available supply.
Since the preceding argument covers both possible quotes
at $x_1=m$ and all admissible choices below $m$, this
verifies its proposed complete strategy.

\medskip
\noindent\textbf{4. Positive returns and equilibrium strategies.}
We first bound the payoff gains from Steps~2 and~3 using
$\eta$, then bound their changes when $\mu>0$.

\paragraph{Payoff gains when $\mu=0$.}
The definition of $\eta$ and the decrease of $g_j$ with $j$
give
\[
g_j-\alpha\ge\eta\quad(1\le j<m),
\qquad
\alpha-g_m\ge\eta.
\]
Since $g_{m-1}<1$, we also have
\[
1-\alpha>\eta,
\qquad
1-g_m>\eta.
\]
To bound the gains involving $\alpha-\beta$, note that
$g_{m-1}-g_m\ge2\eta$. The defining equation for $g_m$ gives
\[ \alpha-\beta > g_m-\beta = (g_{m-1}-g_m )+(1-\beta)(1-g_{m-1}) >2\eta.
\]
Substituting these bounds into the expressions in Steps~2
and~3 shows that every verified gain from a proposed quote
rather than its alternative at a borrower arrival is at
least $\eta$. The two gains from lender 2 choosing its
constant quote $\alpha$ rather than $1$ under AO each exceed
$\beta\eta$. Lender 1's gain from choosing $\alpha$ rather
than $1$ at $x_1=m$ under OA also exceeds $\beta\eta$.

\paragraph{How returns change a lender's payoff.}
Fix any pair of admissible lender strategies and hold
$\lambda$ and $\delta$ unchanged. Write $Q^\mu$ for their
transition matrix and $V_i^\mu$ for lender $i$'s payoff
vector when units return at rate $\mu$ per outstanding unit.
Superscript $0$ denotes the same strategies with $\mu=0$.

Without returns, either lender can make at most $m$ bookings.
Its $k$-th booking cannot precede the $k$-th borrower arrival,
and each fee is at most $1$. Hence
\[
0\le V_i^0(s)\le F_m
\qquad\text{for every }s\in S.
\]
The booking revenue vector is unchanged when only $\mu$
changes. Subtracting~\eqref{eq:payoff-system} at return rates $\mu$ and $0$ gives
\[
(\delta\operatorname{Id}-Q^\mu)
(V_i^\mu-V_i^0)
=
(Q^\mu-Q^0)V_i^0.
\]
At state $s=(x_1,x_2)$, the right-hand side equals
\[
\bigl[(Q^\mu-Q^0)V_i^0\bigr](s)
=
\mu\sum_{k:\,x_k<c_k}(c_k-x_k)
\bigl[V_i^0(s+e_k)-V_i^0(s)\bigr],
\]
where $s+e_k$ is the state after one unit returns to lender
$k$. Both payoffs inside each difference lie between $0$
and $F_m$, so its absolute value is at most $F_m$.
There are at most $m+1$ outstanding units. Thus
\[
\left|
\bigl[(Q^\mu-Q^0)V_i^0\bigr](s)
\right|
\le(m+1)\mu F_m.
\]
As established in Step~1, the inverse of
$\delta\operatorname{Id}-Q^\mu$ has nonnegative entries and
row sums $\frac{1}{\delta}$. Consequently,
\[
\max_{s\in S}
\left|V_i^\mu(s)-V_i^0(s)\right|
\le
(m+1)F_m\frac{\mu}{\delta}
=:\varepsilon_\mu.
\]
This bound holds for every fixed pair of admissible
strategies.

\paragraph{Preserving the verified payoff gains.}
At a borrower arrival when both lenders are available,
switching between the two permitted fees changes a lender's
allocation probability by $\frac{1}{2}$. The continuation
part of its payoff gain is therefore one half of the
difference between its payoffs after lender 1 receives the
borrower and after lender 2 receives it. Each continuation
payoff changes by at most $\varepsilon_\mu$ when returns
are introduced, so the payoff gain changes by at most
$\varepsilon_\mu$. When a lender is alone, changing its
quote changes only the current fee, leaving the payoff
gain unaffected by returns.

A payoff gain from choosing one complete strategy rather
than another is the difference between the two resulting
payoffs from $s^0$. Since each changes by at most
$\varepsilon_\mu$, this gain changes by at most
$2\varepsilon_\mu$.

Using
\[
F_m=\frac{\beta(1-\beta^m)}{1-\beta},
\]
the choice of $\mu$ in the main proof gives
\[
\varepsilon_\mu
=
(m+1)\frac{\mu}{\delta}
\frac{\beta(1-\beta^m)}{1-\beta}
<
\frac{\beta\eta}{2}.
\]
Every verified gain at a borrower arrival consequently
remains at least
\[
\eta-\varepsilon_\mu>0,
\]
and each of the three complete-strategy gains identified
above remains greater than
\[
\beta\eta-2\varepsilon_\mu>0.
\]

\paragraph{Equilibrium verification.}
With $\lambda,\mu>0$, every state is reachable from $s^0$
under every admissible pair of strategies: successive
arrivals can lend all units before any return, and a
suitable sequence of returns before another arrival can
then produce any state. Together with the complete-strategy
identity in Step~1, this ensures that a positive quote
inequality at a state where a strategy changes gives a
strict payoff difference from $s^0$, provided the inequalities
at all other states are nonnegative.

Under AO, lender 1 therefore retains a unique best response
to each of lender 2's two constant quotes, including its
quote $1$ when alone. Lender 2 continues to prefer $\alpha$
against either response. Thus the strategies specified in
Step~2 form the unique pure stationary equilibrium under AO.

Under OA, lender 2's strategy from Step~3 remains a best
response. For lender 1, apply the continuation-payoff bound
separately to each strategy that fixes its common quote in
$(m,0)$ and $(m,1)$ to $\alpha$ or to $1$ and quotes $1$
whenever $1\le x_1<m$. Under either fixed choice, the positive
quote inequalities and the identity in Step~1 establish
optimality against every change to its quotes with $x_1<m$.
The positive payoff gain between the two resulting complete
strategies then selects $\alpha$ at $x_1=m$.

Both strategies are evaluated using their respective
matrices $Q^\mu$, which include returns from $x_1<m$ to
$x_1=m$. Lender 1's proposed strategy is therefore also a
best response, establishing the stated OA equilibrium.

\subsection{Equilibrium verification for Proposition~\ref{prop:two-fee-region}}
\label{app:two-fee-region}

Fix the informed lender's quotes when alone at $p_H$, as justified in
Appendix~\ref{app:theorem1}. This leaves eight strategy profiles
under AO and sixteen under OA. We evaluate each remaining strategy using \eqref{eq:payoff-system}. The resulting best responses are given below; the argument following the tables verifies them throughout the proposition's parameter region.

\medskip
\noindent\textbf{AO.}
Lender 1's unique best responses to lender 2's two possible quotes are
\begin{center}
\renewcommand{\arraystretch}{1.1}
\begin{tabular}{@{}ccc@{}}
\toprule
Lender 2's quote & \multicolumn{2}{c}{Lender 1's best response} \\
\cmidrule(l){2-3}
when $x_2=1$ & $s=(1,1)$ & $s=(2,1)$ \\
\midrule
$p_L$ & $p_L$ & $p_L$ \\
$p_H$ & $p_H$ & $p_L$ \\
\bottomrule
\end{tabular}
\end{center}
Against either response, lender 2 earns more by quoting $p_L$ than
$p_H$. Hence the unique AO equilibrium has lender 1 quote $p_L$ when
competing and $p_H$ when alone, while lender 2 always quotes $p_L$.

\medskip
\noindent\textbf{OA.}
Lender 2's unique best responses to all four lender-1 strategies are
\begin{center}
\renewcommand{\arraystretch}{1.1}
\begin{tabular}{@{}cccc@{}}
\toprule
\multicolumn{2}{c}{Lender 1's quotes}
 & \multicolumn{2}{c}{Lender 2's best response} \\
\cmidrule(r){1-2}\cmidrule(l){3-4}
$x_1=1$ & $x_1=2$ & $s=(1,1)$ & $s=(2,1)$ \\
\midrule
$p_L$ & $p_L$ & $p_L$ & $p_L$ \\
$p_L$ & $p_H$ & $p_L$ & $p_L$ \\
$p_H$ & $p_L$ & $p_H$ & $p_L$ \\
$p_H$ & $p_H$ & $p_H$ & $p_L$ \\
\bottomrule
\end{tabular}
\end{center}
Against lender 2's competitive quotes $(p_H,p_L)$, lender 1's unique
best response is $(p_H,p_L)$ at own supply 1 and 2. Against $(p_L,p_L)$,
either lender-1 strategy quoting $p_L$ at own supply 1 gains by raising
that quote to $p_H$. Only the third row therefore gives an OA equilibrium.

\noindent \paragraph{Exact verification.}
Normalize
\[
r=\frac{\lambda}{\mu},\qquad
d=\frac{\delta}{\mu},\qquad
\alpha=\frac{p_L}{p_H}.
\]
After dividing payoffs by $p_H$, every required payoff difference
has a strictly positive denominator independent of $\alpha$ and a
numerator affine in $\alpha$. It therefore suffices to verify the
endpoint values $\alpha=\frac{37}{50}$ and $\alpha=\frac{19}{25}$.
For either endpoint, write
\[
r=(1+d)\frac{52+55u}{40(1+u)},
\qquad d>0,\quad u\geq 0,
\]
which parameterizes
\[
\frac{13}{10}(1+d)
\leq r
<
\frac{11}{8}(1+d).
\]
After clearing positive denominators, every numerator is a polynomial
in $d$ and $u$ with nonnegative rational coefficients and is positive
at $u=0$ for every $d>0$. Hence every numerator is positive throughout
this half-open region. At the remaining upper boundary $r=\frac{11}{8}(1+d)$, every cleared numerator is a nonzero polynomial in $d$ with nonnegative rational coefficients, and is therefore
positive for $d>0$. Thus all twenty-five payoff comparisons are strict
throughout the stated parameter region, proving the displayed best
responses and the uniqueness of the $AO$ and $OA$ equilibria.

\subsection{Gains to both lenders}
\label{app:pareto}

Fix supplies $(c_1,c_2)=(3,1)$, permitted fees $P=\{0.75,1\}$,
$\lambda=2$, $\mu=1$ and $\delta=0.1$.

\paragraph{Regime AO}
Lender 2 chooses one quote whenever available, giving two
strategies. Lender 1 chooses quotes in states $(1,1)$, $(2,1)$
and $(3,1)$, giving eight remaining strategies.

\begin{center}
\setlength{\tabcolsep}{4pt}
\renewcommand{\arraystretch}{1.1}
{%
\begin{tabular}{@{}ccccrr@{}}
\toprule
\shortstack{Lender 2's quote\\when $x_2=1$}
& \multicolumn{3}{c}{Lender 1's best response}
& Payoff & \shortstack{Next-best\\payoff} \\
\cmidrule(lr){2-4}
& $(1,1)$ & $(2,1)$ & $(3,1)$ & & \\
\midrule
$\mathbf{0.75}$ & $\mathbf{0.75}$ & $\mathbf{0.75}$
& $\mathbf{0.75}$ & $\mathbf{11.7404}$ & $11.6288$ \\
$1$ & $1$ & $1$ & $0.75$ & $12.9418$ & $12.9240$ \\
\midrule
\shortstack{Lender 1's quotes\\in $(1,1),(2,1),(3,1)$}
& \multicolumn{3}{c}{Lender 2's best response}
& Payoff & \shortstack{Next-best\\payoff} \\
\cmidrule(lr){2-4}
& \multicolumn{3}{c}{$x_2=1$} & & \\
\midrule
$\boldsymbol{(0.75,0.75,0.75)}$
& \multicolumn{3}{c}{$\mathbf{0.75}$}
& $\mathbf{4.0717}$ & $2.1340$ \\
$(1,1,0.75)$
& \multicolumn{3}{c}{$0.75$}
& $4.8244$ & $4.4666$ \\
\bottomrule
\end{tabular}}
\end{center}

Every displayed best response is unique. Lender 2 responds to
either lender-1 strategy in the second block with $0.75$.
Lender 1 then responds by quoting $0.75$ in all three states
where both lenders are available. Hence the unique equilibrium
$\sigma^{\mathrm{AO}}$ has lender 1 quoting $1$ when alone and
$0.75$ when competing, while lender 2 quotes $0.75$ whenever
available.

\paragraph{Regime OA}
Lender 1 chooses quotes at own available supplies $x_1=1,2,3$.
Lender 2 chooses quotes in states $(1,1)$, $(2,1)$ and $(3,1)$.
Each lender has eight remaining strategies.

\begin{center}
\setlength{\tabcolsep}{4pt}
\renewcommand{\arraystretch}{1.1}{%
\begin{tabular}{@{}ccccrr@{}}
\toprule
\shortstack{Lender 1's quotes\\at $x_1=1,2,3$}
& \multicolumn{3}{c}{Lender 2's best response}
& Payoff & \shortstack{Next-best\\payoff} \\
\cmidrule(lr){2-4}
& $(1,1)$ & $(2,1)$ & $(3,1)$ & & \\
\midrule
$(0.75,0.75,0.75)$ & $0.75$ & $0.75$ & $0.75$ & $4.2538$ & $4.0237$ \\
$(0.75,0.75,1)$    & $0.75$ & $0.75$ & $0.75$ & $4.9581$ & $4.7678$ \\
$(0.75,1,0.75)$    & $0.75$ & $0.75$ & $0.75$ & $4.8139$ & $4.7033$ \\
$(0.75,1,1)$       & $0.75$ & $0.75$ & $0.75$ & $5.3121$ & $5.2819$ \\
$(1,0.75,0.75)$    & $1$    & $0.75$ & $0.75$ & $4.5114$ & $4.4362$ \\
$(1,0.75,1)$       & $1$    & $0.75$ & $0.75$ & $5.1710$ & $5.1088$ \\
$(1,1,0.75)$       & $1$    & $1$    & $0.75$ & $4.9609$ & $4.9502$ \\
$\boldsymbol{(1,1,1)}$ & $\mathbf{1}$ & $\mathbf{1}$
& $\mathbf{0.75}$ & $\mathbf{5.4552}$ & $5.4484$ \\
\midrule
\shortstack{Lender 2's quotes\\in $(1,1),(2,1),(3,1)$}
& \multicolumn{3}{c}{Lender 1's best response}
& Payoff & \shortstack{Next-best\\payoff} \\
\cmidrule(lr){2-4}
& $x_1=1$ & $x_1=2$ & $x_1=3$ & & \\
\midrule
$(0.75,0.75,0.75)$ & $1$ & $1$ & $1$ & $11.2477$ & $10.7906$ \\
$(1,0.75,0.75)$    & $1$ & $1$ & $1$ & $11.4411$ & $10.9925$ \\
$\boldsymbol{(1,1,0.75)}$ & $\mathbf{1}$ & $\mathbf{1}$
& $\mathbf{1}$ & $\mathbf{11.9532}$ & $11.8808$ \\
\bottomrule
\end{tabular}}
\end{center}

Every displayed best response is unique. Lender 1 responds to
each of lender 2's three strategies in the second block by
quoting $1$ whenever available. Lender 2 then quotes $0.75$
only in $(3,1)$ and $1$ in every other state where it is
available. This gives the unique equilibrium
$\sigma^{\mathrm{OA}}$. The exact payoff differences between the two equilibria satisfy
\[
\begin{aligned}
V_1(s^0,\sigma^{\mathrm{OA}})
-V_1(s^0,\sigma^{\mathrm{AO}})&>0.2127,\\
V_2(s^0,\sigma^{\mathrm{OA}})
-V_2(s^0,\sigma^{\mathrm{AO}})&>1.3834.
\end{aligned}
\]
Thus both lenders strictly prefer OA to AO.

\section{Ownership Fragmentation: Equilibrium Verification} \label{app:fragmentation}
This appendix verifies that the strategies in Proposition~\ref{prop:split-informed-lender} form the unique pure stationary equilibrium in each game throughout the stated parameter region. The total-payoff comparison is proved in the main text.

\paragraph{Remaining strategies.}
Fix informed lenders' fees at $p_H$ when they observe zero aggregate rival available supply, as shown in Section~\ref{c:sole_lender}. Every equilibrium
must satisfy this restriction. 

\noindent In the $(2,2)$ game under OA, lender 1 chooses its fees at $x_1=1,2$, giving four strategies. Lender 2 chooses its fees at observations $(x_2,x_1)\in\{1,2\}\times\{1,2\}$, giving sixteen remaining strategies. There are therefore $4\cdot16=64$ remaining strategy combinations.

\noindent In the $(2,1,1)$ game under OAA, lender 1 again has four strategies. Each informed lender has one unit and chooses its fee at each positive aggregate rival available supply in $\{1,2,3\}$, giving eight remaining strategies. These fees cannot distinguish states with the same aggregate rival available supply. There are therefore $4\cdot8\cdot8=256$ remaining strategy combinations. The two informed lenders need not use the same strategy.

\paragraph{Payoff comparisons.}
Write $r=\frac{\lambda}{\mu}$, $d=\frac{\delta}{\mu}$ and $\alpha=\frac{p_L}{p_H}$.

The parameter region becomes
\[
0<d\leq1,\qquad 4+4d\leq r\leq\frac{21}{5}+4d, \qquad \frac{899}{1000}\leq\alpha\leq\frac{901}{1000}.
\]
For each game, let $\sigma^*$ denote the strategies in Proposition~\ref{prop:split-informed-lender}. For an alternative strategy $\widetilde\sigma_i$ and fixed rivals' strategies $\sigma_{-i}$, compare
\[
\Delta_i(\widetilde\sigma_i,\sigma_{-i}) = \frac{1}{p_H}\bigl[ V_i(s^0,(\sigma_i^*,\sigma_{-i}))-V_i(s^0,(\widetilde\sigma_i,\sigma_{-i}))\bigr].
\]
For each informed lender, we check every alternative against every combination of rivals' remaining strategies. We then check lender 1's three alternatives against the informed lenders' proposed strategies. We compute the normalised payoffs from~\eqref{eq:payoff-system} using borrower arrival rate $r$, return rate $1$, discount rate $d$, and fees $\{\alpha,1\}$. Each calculation uses the strategies in that payoff. The numbers of comparisons are
\begin{center}
\small
\setlength{\tabcolsep}{3pt}
\begin{tabular}{@{}ccrrr@{}}
\toprule
Regime & Lender & \shortstack{Rivals'\\strategies}
& \shortstack{Alternative\\strategies} & Comparisons \\
\midrule
OA  & 2 & 4  & 15 & 60 \\
OA  & 1 & 1  & 3  & 3 \\
OAA & 2 & 32 & 7  & 224 \\
OAA & 1 & 1  & 3  & 3 \\
\bottomrule
\end{tabular}
\end{center}
In OAA, exchanging lenders 2 and 3 gives the same comparisons for lender 3. This leaves 290 payoff differences to verify, without restricting the two informed lenders to identical strategies.

\paragraph{Verification over the parameter region.}
For a fixed choice of the lower or higher fee at each observation, $Q(\sigma)$ is independent of $\alpha$. Equation~\eqref{eq:payoff-system} therefore makes each payoff difference rational in $(r,d)$ and affine in $\alpha$. The computed denominators have nonnegative coefficients in $r,d$ and are nonzero, so they are positive for $r,d>0$. It suffices to check
\[
\alpha\in\left\{\frac{899}{1000},\frac{901}{1000}\right\}.
\]
At either endpoint, substitute
\[
d=\frac{1}{1+u},\qquad
r=4+4d+\frac{v}{5(1+v)},\qquad u,v\geq0.
\]
This covers $0<d\leq1$ and $4+4d\leq r<\frac{21}{5}+4d$. At the remaining upper boundary, substitute $d=\frac{1}{1+u}$ and $r=\frac{21}{5}+4d,\qquad u\geq0$.
After clearing positive denominators, every resulting numerator has nonnegative coefficients and a positive constant term. These checks were performed using exact arithmetic for all 290 payoff differences. Thus all 1160 polynomial checks are satisfied, and every required payoff difference is strictly positive throughout the parameter region. In each game, every informed lender's proposed strategy is therefore its unique best response to every combination of rivals' remaining strategies. Every equilibrium must use these informed strategies. Against them, lender 1's proposed strategy is its unique best response. The proposed strategies are consequently mutual best responses, and no other combination can be an equilibrium.


\end{document}

%% file: figure_arxiv.tex
%

\begin{figure*}
\centering

\definecolor{HeatNegDark}{HTML}{145DA0}
\definecolor{HeatNegMid}{HTML}{329FB4}
\definecolor{HeatNegLight}{HTML}{93D5D5}
\definecolor{HeatZero}{HTML}{F7F7F7}
\definecolor{HeatPosLight}{HTML}{FDC7B5}
\definecolor{HeatPosMid}{HTML}{F76B57}
\definecolor{HeatPosDark}{HTML}{D81B60}
\definecolor{NoEqGrey}{HTML}{C7C7C7}
\definecolor{InkGrey}{HTML}{555555}
\definecolor{GridGrey}{HTML}{E7E7E7}
\definecolor{BertrandShade}{HTML}{DCEAF3}
\definecolor{EdgeworthShade}{HTML}{F3E7C7}
\definecolor{ShortageShade}{HTML}{F1D9D5}
\definecolor{ProofWarm}{HTML}{D95F4B}

\pgfplotsset{
  figone axis/.style={
    axis lines=left,
    tick label style={font=\scriptsize},
    label style={font=\small},
    grid=major,
    grid style={draw=GridGrey, line width=0.25pt},
    axis on top,
    clip=true,
  },
  capacity boundary/.style={InkGrey, dashed, line width=0.6pt},
  economic shade/.style={draw=none, fill opacity=0.12},
  economic shade plain/.style={draw=none, fill opacity=0.23},
  proved band/.style={draw=ProofWarm, fill=ProofWarm, fill opacity=0.30,
                      draw opacity=1, line width=0.9pt},
}

\pgfplotsset{
  colormap={payoffmap}{
    rgb255(0cm)=(20,93,160);
    rgb255(1cm)=(50,159,180);
    rgb255(2cm)=(147,213,213);
    rgb255(3cm)=(247,247,247);
    rgb255(4cm)=(253,199,181);
    rgb255(5cm)=(247,107,87);
    rgb255(6cm)=(216,27,96)
  }
}

\begin{subfigure}[t]{0.42\textwidth}
\centering
\begin{tikzpicture}
\begin{axis}[
    figone axis,
    width=\linewidth,
    height=6.25cm,
    xmin=0.05, xmax=5.05,
    ymin=0.05, ymax=5.05,
    xlabel={Arrival rate $\lambda$},
    ylabel={Return rate $\mu$},
    xtick={0.1,1,2,3,4,5},
    ytick={0.1,1,2,3,4,5},
    axis background/.style={fill=white},
    colorbar,
    colorbar style={
      width=1.6mm,
      ytick={-10,-5,0,5,10},
      title={\scriptsize Payoff diff. (\%)},
      title style={yshift=-2.8mm, font=\tiny},
      tick label style={font=\tiny},
    },
    point meta min=-12,
    point meta max=12,
]
\addplot[
    matrix plot*,
    mesh/cols=50,
    point meta=explicit,
    colormap name=payoffmap,
    unbounded coords=discard,
] table[x=lambda,y=mu,meta=diff,col sep=comma]
    {fig1a_heatmap_recovered_long.csv};

\addplot[name path=aBottom, draw=none, domain=0.05:5.05, samples=2] {0.05};
\addplot[name path=aCtot, draw=none, domain=0.05:5.05, samples=2] {x/3};
\addplot[name path=aCmax, draw=none, domain=0.05:5.05, samples=2] {x/2};
\addplot[name path=aTop, draw=none, domain=0.05:5.05, samples=2] {5.05};

\addplot[economic shade, fill=ShortageShade] fill between[of=aBottom and aCtot];
\addplot[economic shade, fill=EdgeworthShade] fill between[of=aCtot and aCmax];
\addplot[economic shade, fill=BertrandShade] fill between[of=aCmax and aTop];

\addplot[capacity boundary, domain=0.10:5.0, samples=2] {x/2};
\addplot[capacity boundary, domain=0.10:5.0, samples=2] {x/3};

\node[font=\tiny, text=InkGrey, fill=BertrandShade, fill opacity=0.82, text opacity=1,
      rounded corners=1pt, inner sep=1.2pt]
    at (axis cs:1.25,3.55) {Bertrand region};
\node[font=\tiny, text=InkGrey, fill=EdgeworthShade, fill opacity=0.84, text opacity=1,
      rounded corners=1pt, inner sep=1.2pt, rotate=22]
    at (axis cs:3.15,1.32) {Edgeworth region};
\node[font=\tiny, text=InkGrey, fill=ShortageShade, fill opacity=0.84, text opacity=1,
      rounded corners=1pt, inner sep=1.2pt]
    at (axis cs:4.25,0.78) {Shortage region};

\end{axis}
\end{tikzpicture}
\caption{Relative total-payoff difference
$(\pi^{OA}-\pi^{AO}) \frac{100}{\pi^{AO}}$ (in \%)
for supplies $(2,1)$ and fees $P=\{0.75,1\}$.
See the numerical analysis in
Section~\ref{par:numerical-comparison}.}
\label{fig:payoff-map-regimes}
\end{subfigure}\hfill
\begin{subfigure}[t]{0.27\textwidth}
\centering
\begin{tikzpicture}
\begin{axis}[
    figone axis,
    width=\linewidth,
    height=6.25cm,
    xmin=0.45, xmax=1.65,
    ymin=0.20, ymax=1.10,
    xlabel={Arrival rate $\lambda$},
    ylabel={Return rate $\mu$},
    xtick={0.5,0.75,1,1.25,1.5},
    ytick={0.25,0.5,0.75,1},
]
\addplot[name path=bBottom, draw=none, domain=0.45:1.65, samples=2] {0.20};
\addplot[name path=bCtot, draw=none, domain=0.45:1.65, samples=2] {x/3};
\addplot[name path=bCmax, draw=none, domain=0.45:1.65, samples=2] {x/2};
\addplot[name path=bTop, draw=none, domain=0.45:1.65, samples=2] {1.10};
\addplot[economic shade plain, fill=ShortageShade] fill between[of=bBottom and bCtot];
\addplot[economic shade plain, fill=EdgeworthShade] fill between[of=bCtot and bCmax];
\addplot[economic shade plain, fill=BertrandShade] fill between[of=bCmax and bTop];
\addplot[capacity boundary, domain=0.45:1.65, samples=2] {x/2};
\addplot[capacity boundary, domain=0.45:1.65, samples=2] {x/3};

\addplot[name path=pLower, draw=none, domain=0.45:1.65, samples=2]
    {x/1.375 - 0.1};
\addplot[name path=pUpper, draw=none, domain=0.45:1.65, samples=2]
    {x/1.30 - 0.1};
\addplot[proved band] fill between[of=pLower and pUpper];
\addplot[ProofWarm, line width=0.9pt, domain=0.45:1.65, samples=2]
    {x/1.375 - 0.1};
\addplot[ProofWarm, line width=0.9pt, domain=0.45:1.65, samples=2]
    {x/1.30 - 0.1};

\node[font=\tiny, text=InkGrey, fill=BertrandShade, fill opacity=0.82, text opacity=1,
      rounded corners=1pt, inner sep=1.0pt]
    at (axis cs:0.72,0.96) {Bertrand region};
\node[font=\tiny, text=InkGrey, fill=EdgeworthShade, fill opacity=0.84, text opacity=1,
      rounded corners=1pt, inner sep=1.0pt, rotate=35]
    at (axis cs:1.08,0.48) {Edgeworth region};
\node[font=\tiny, text=InkGrey, fill=ShortageShade, fill opacity=0.82, text opacity=1,
      rounded corners=1pt, inner sep=1.0pt]
    at (axis cs:1.28,0.29) {Shortage region};
\node[font=\scriptsize, text=ProofWarm, fill=white, fill opacity=0.82, text opacity=1,
      rounded corners=1pt, inner sep=1.1pt]
    at (axis cs:1.30,0.82) {$\pi^{OA}>\pi^{AO}$};
\end{axis}
\end{tikzpicture}
\caption{Sufficient parameter region for
$\pi^{OA}>\pi^{AO}$ established in
Proposition~\ref{prop:two-fee-region}.}
\label{fig:prop1-region}
\end{subfigure}\hfill
\begin{subfigure}[t]{0.27\textwidth}
\centering
\begin{tikzpicture}
\begin{axis}[
    figone axis,
    width=\linewidth,
    height=6.25cm,
    xmin=1.45, xmax=5.00,
    ymin=0.20, ymax=1.10,
    xlabel={Arrival rate $\lambda$},
    ylabel={Return rate $\mu$},
    xtick={1.5,2,3,4,5},
    ytick={0.25,0.5,0.75,1},
]
\addplot[name path=cBottom, draw=none, domain=1.45:5.00, samples=2] {0.20};
\addplot[name path=cCtot, draw=none, domain=1.45:5.00, samples=2] {x/4};
\addplot[name path=cCmax, draw=none, domain=1.45:5.00, samples=2] {x/2};
\addplot[name path=cTop, draw=none, domain=1.45:5.00, samples=2] {1.10};
\addplot[economic shade plain, fill=ShortageShade] fill between[of=cBottom and cCtot];
\addplot[economic shade plain, fill=EdgeworthShade] fill between[of=cCtot and cCmax];
\addplot[economic shade plain, fill=BertrandShade] fill between[of=cCmax and cTop];
\addplot[capacity boundary, domain=1.45:2.20, samples=2] {x/2};
\addplot[capacity boundary, domain=1.45:4.40, samples=2] {x/4};

\addplot[name path=sLower, draw=none, domain=1.45:5.00, samples=2]
    {(x-0.4)/4.2};
\addplot[name path=sUpper, draw=none, domain=1.45:5.00, samples=2]
    {x/4 - 0.1};
\addplot[proved band] fill between[of=sLower and sUpper];
\addplot[ProofWarm, line width=0.9pt, domain=1.45:5.00, samples=2]
    {(x-0.4)/4.2};
\addplot[ProofWarm, line width=0.9pt, domain=1.45:5.00, samples=2]
    {x/4 - 0.1};

\node[font=\tiny, text=InkGrey, fill=BertrandShade, fill opacity=0.82, text opacity=1,
      rounded corners=1pt, inner sep=1.0pt, align=center]
    at (axis cs:1.86,1.02) {Bertrand\\region};
\node[font=\tiny, text=InkGrey, fill=EdgeworthShade, fill opacity=0.84, text opacity=1,
      rounded corners=1pt, inner sep=1.0pt, rotate=35]
    at (axis cs:2.25,0.78) {Edgeworth region};
\node[font=\tiny, text=InkGrey, fill=ShortageShade, fill opacity=0.82, text opacity=1,
      rounded corners=1pt, inner sep=1.0pt]
    at (axis cs:4.00,0.34) {Shortage region};
\node[font=\scriptsize, text=ProofWarm, fill=white, fill opacity=0.82, text opacity=1,
      rounded corners=1pt, inner sep=1.1pt]
    at (axis cs:3.90,0.83) {$\pi^{OAA}>\pi^{OA}$};
\end{axis}
\end{tikzpicture}
\caption{Sufficient region for
$\pi^{OAA}>\pi^{OA}$ when the informed lender
is split into two independent lenders
(Proposition~\ref{prop:split-informed-lender}).}
\label{fig:splitting-region}
\end{subfigure}

\caption{Payoff comparisons across borrower arrival rates
$\lambda$ and return rates $\mu$, with discount rate
$\delta=0.1$ throughout.
The dashed lines distinguish three demand regimes:
the Bertrand region ($\lambda\leq c_{\max}\mu$),
the Edgeworth region ($c_{\max}\mu<\lambda\leq C\mu$),
and the shortage region ($\lambda>C\mu$),
where $c_{\max}=\max_{i\in I}c_i$ is the largest
individual lender's supply and $C=\sum_{i\in I}c_i$
is aggregate supply.
Panel~(a) presents numerical results, while
panels~(b) and~(c) illustrate the analytically established
sufficient regions from
Propositions~\ref{prop:two-fee-region} and
\ref{prop:split-informed-lender}, respectively.
The shaded demand regimes provide economic benchmarks
rather than equilibrium boundaries.}
\label{fig:main-regime-map}
\end{figure*}